\def\@hideLIPIcs{1} %
\definecolor{green}{RGB}{0, 150, 130}
\definecolor{green70}{RGB}{76, 181, 167}
\definecolor{blue}{RGB}{70, 100, 170}
\definecolor{blue70}{RGB}{125,146,195}
\definecolor{blue50}{RGB}{162,177,212}                                                           
\definecolor{blue30}{RGB}{199,208,229}
\definecolor{blue15}{RGB}{227,232,242}
\definecolor{lightgray}{rgb}{0.86,0.86,0.86}
\definecolor{greengray}{RGB}{193,228,224}
\definecolor{redgray}{RGB}{233,183,183}
\definecolor{red}{RGB}{204,0,0}
\newcommand{\Lin}{L_{in}}
\newcommand{\Lout}{L_{out}}%
\newcommand{\ILP}{\textsc{eILP}}
\newcommand{\mstconnect}{\textsc{MSTConnect}}%
\newcommand{\gwc}{\textsc{GWC}}%
\newcommand{\ls}{\textsc{LS($k$)}}%
\newcommand{\lsthree}{\textsc{LS(3)}}%
\newcommand{\lsfive}{\textsc{LS(5)}}%
\newcommand{\FSM}{\textsc{FSM}}%
\newcommand{\HBD}{\textsc{HBD}}%
\newcommand{\SMC}{\textsc{SMC}}%
\newcommand{\ie}{i.e.\ }
\newcommand{\etal}{et~al.}
\title{Engineering Weighted Connectivity Augmentation Algorithms
	 }
\author{Marcelo Fonseca Faraj}{Heidelberg University, Heidelberg, Germany}{marcelofaraj@informatik.uni-heidelberg.de}{https://orcid.org/0000-0001-7100-236X}{}
\author{Ernestine Großmann}{Heidelberg University, Heidelberg, Germany}{e.grossmann@informatik.uni-heidelberg.de}{https://orcid.org/0000-0002-9678-0253}{}
\author{Felix Joos}{Heidelberg University, Heidelberg, Germany}{joos@informatik.uni-heidelberg.de}{
	https://orcid.org/0000-0002-8539-9641}{}
\author{Thomas Möller}{Heidelberg University, Heidelberg, Germany}{thomas.moeller@uni-heidelberg.de}{https://orcid.org/0009-0005-4312-2272}{}
\author{Christian Schulz}{Heidelberg University, Heidelberg, Germany}{christian.schulz@informatik.uni-heidelberg.de}{https://orcid.org/0000-0002-2823-3506}{}
\authorrunning{M. Fonseca \etal} %
\keywords{weighted connectivity augmentation, approximation, heuristic, integer linear program, algorithm engineering} 
\begin{document}
	
	\maketitle

\begin{abstract}
Increasing the connectivity of a graph is a pivotal challenge in robust network design. The weighted connectivity augmentation problem is a common version of the problem that takes link costs into consideration. The problem is then to find a minimum cost subset of a given set of weighted links that increases the connectivity of a graph by one when the links are added to the edge set of the input instance.
In this work, we give a first implementation of recently discovered better-than-2 approximations. Furthermore, we propose three new heuristic and one exact approach. These include a greedy algorithm considering link costs and the number of unique cuts covered, an approach based on minimum spanning trees and a local search algorithm that may improve a given solution by swapping links of paths. Our exact approach uses an ILP formulation with efficient cut enumeration as well as a fast initialization routine. We then perform an extensive experimental evaluation which shows that our algorithms are faster and yield the best solutions compared to the current state-of-the-art as well as the recently discovered better-than-2 approximation algorithms.
Our novel local search algorithm can improve solution quality even further. 
\end{abstract}

\maketitle

\vfill

\newpage

\setcounter{page}{1}%
\section{Introduction}\label{sec:intro}
Many real-world coherences can be modeled as graphs, including technological, social, and biological networks. A common problem of interest is the robustness of such a graph. Particularly in technological networks this is important for creating systems that are robust and fail-safe~\cite{robustness}. An example is a power grid where single lines can fail, either randomly due to age, or by targeted attacks. If a line fails, alternative routes are used which is increasing the load on them and therefore the chance of failure. To obtain a fail-safe network that can survive both, random failures and targeted failures of important lines, the graph needs to be well-connected. Increasing the connectivity and therefore improving the robustness at minimum cost is known as connectivity augmentation or the survivable network problem.
Another technological example is a computer network like the internet which should be designed in a fail-safe way while reliable transportation networks can avoid \hbox{traffic congestion}.

However, it is well-known that the weighted connectivity augmentation problem is NP-hard. Eswaran and Tarjan~\cite{np-complete} have shown that the decision problem, whether there is an augmentation of at most a given weight, is NP-complete. Frederickson and Ja'Ja'~\cite{heuristic-arborescence} have shown that this is also true for the simpler special case where the graph is a tree, with weights being only 1 or 2. This justifies the importance of good heuristic and approximation algorithms. Furthermore, the weighted connectivity augmentation problem is APX-hard, which was also shown for the weighted tree augmentation problem by Kortsarz, Krauthgamer and Lee~\cite{apx-hardness}. Despite the fact that there is no polynomial time approximation algorithm with an approximation factor arbitrarily close to 1, there has been much progress in improving the \hbox{approximation ratio}.
Recently, the connectivity augmentation problem has been discussed frequently in the context of approximation algorithms with approximation factors \hbox{below~2~\cite{unweighted-below-2,unweighted-below-2-2,wtap,wcap}.} This includes work on special cases like the tree augmentation problem~\cite{wtap}, \hbox{as well as the general case~\cite{unweighted-below-2,wcap}.}

In recent years, Henzinger~\etal~\cite{mincut,mincutjv} developed the leading codes for the minimum cut problem in graphs. This includes the development of cutting-edge shared-memory inexact algorithms, consistently delivering near-optimal results. Additionally, they engineered state-of-the-art shared-memory exact algorithms~\cite{viecut}, surpassing the previous state-of-the-art by an order of magnitude in running time, as well as highly efficient approaches for tackling the broader all minimum cut problem~\cite{cactus-state-of-the-art}. It turns out that these algorithms, i.e.~computing minimum cuts, (enumeration of) all minimum cuts, and the efficient computation of a cactus representation of a graph, are important subroutines for algorithms that tackle the connectivity augmentation problem. Thus in this work, we heavily employ these recently developed techniques to engineer efficient algorithms for the \hbox{connectivity augmentation problem.}

 \textit{Our Results.} Our contribution in this work is two-fold. 
 First, we give the first implementation and experimental evaluation of two recently discovered connectivity augmentation approximation algorithms due to Traub and Zenklusen~\cite{wcap}. More precisely, Traub and Zenklusen describe two algorithms: a greedy $(1+\ln2+\epsilon)$-approximation and the local search based $(1.5+\epsilon)$-approximation, which we implement and evaluate.
 	
Secondly, we propose three new heuristic and an exact algorithm. The first algorithm is a greedy heuristic considering link costs and the number of cuts covered by a link. This simple algorithm already outperforms all previous state-of-the-art algorithms by more than \numprint{22}\,\% %
 improvement in solution quality (reduced link costs) on instances where links have small costs. Our second strategy uses minimum spanning trees to find a feasible solution first and then greedily improves it. Additionally, we present a local search algorithm that can improve a given solution by replacing link sets with cheaper ones. On instances with large link costs, the minimum spanning tree algorithm has the overall best performance regarding solution quality, running time and memory consumption. 
 It computes solutions \numprint{8}\,\% better than the best performing previous state-of-the-art on these instances, while being a factor of \numprint{7} times faster on average. With our local search algorithm we can further improve these solutions on average by \numprint{2}\,\%. Lastly, we introduce a new exact solver using an ILP formulation with efficient cut enumeration as well as a fast initialization routine, for which we utilize our fast minimum spanning tree heuristic. Especially on real-world instances, it is able to outperform the previous state-of-the-art heuristic solvers regarding running time for both \hbox{small and large link costs.}

 \section{Preliminaries}\label{sec:prelim}

 An undirected graph $G=(V,E)$ is a structure that consists of a set of vertices $V$ and a set of edges $E\subseteq \binom{V}{2}$ connecting pairs of vertices. The number of vertices is denoted as $n$ and the number of edges as $m$. The graph $G$ is \emph{connected} if there is a path between any two vertices.
 The \emph{edge connectivity} of a graph is the maximal number of edge-disjoint paths that exist between any pair of vertices. A graph is \emph{$k$-edge-connected}, if ${k-1}$ arbitrary edges can be removed without disconnecting the graph. %
 A partition of a graph is a partition of the vertex set into mutually disjoint sets.
 
 A \emph{cut} of a graph is a partition of the vertex set into two disjoint subsets, also called a bipartition. Any cut can be represented as one of its two constituent vertex sets. Every non-empty proper subset of $V$ is a cut.  To prevent different representations of the same cut we use the notation where a cut is given as one set of the partition (only the representation that does not include an arbitrarily chosen root $r\in V(G)$ is used).
 The \emph{size} or \emph{weight} of a cut is the number of edges or the sum of the edge weights that have one endpoint in each subset. A cut is a \emph{minimum cut} if there is no cut with smaller size or weight. The set of all minimum cuts is denoted as $C_G$ and $\operatorname{cut}:C_G\times \binom{V}{2} \to\{0,1\}$ is a function that is 1 if and only if the endpoints $u$ and $v$ of an edge $e=uv\in\binom{V}{2}$ lie in different sets of the partition of a cut $c\in C_G$.
 
The goal of the \textit{Weighted Connectivity Augmentation Problem (WCAP)} is to increase the edge connectivity of a graph. More formally, for a given $k$-connected graph $G$ with a set of links $L\subseteq \binom{V}{2}$ and a cost function $c:L\to\mathbb{R}_{\geq0}$, the task is to find the cheapest subset of links $S\subseteq L$ that will increase the edge connectivity to $k+1$. A link $l\in L$ \emph{covers} a minimum cut $c\in C_G$ if the size or weight of the cut $c$ is increased in the graph $G'=(V,E\cup\{l\})$. The graph $G_L=(V,L)$ is called the link graph. The set of links is disjoint with the set of edges, \ie $L\cap E = \emptyset$.
 For the ease of notation the cost function is extended to sets, where it is the sum of the cost of all elements.
 
 If the input graph is disconnected, the weighted connectivity augmentation problem coincides with the minimum spanning tree problem among its components. In this work we therefore only focus on connected graphs, as the other case is simple to solve via well-known efficient minimum spanning tree algorithms. 
 
 A \textit{cactus graph} is a connected graph, such that any two cycles have at most one vertex in common. To distinguish between edges that lie within a cycle and those that do not, they are called cycle edges and tree edges, respectively.
 The \textit{cactus graph representation} of the set of minimum cuts $C_G$ of a graph $G=(V,E)$ is a cactus graph $C= (V_c,E_c)$ with a function $\Pi:V\to V_c$ and its inverse $\Pi^{-1}:V_c\to2^{V}$ which is defined as $v \mapsto\{u\in V:\Pi(u)=v\}$. The functions $\Pi$ and $\Pi^{-1}$ are defined such that each minimum cut in $C$ corresponds to a minimum cut in $G$, \ie for all $c\in C_C:\bigcup_{v\in c}\Pi^{-1}(v)\in C_G$ and each minimum cut in $G$ is represented in the cactus graph, \ie for all $c_G\in C_G$ there exists a $c_C\in C_C$ such that $\Pi(v)\in c_C$.
  Figure~\ref{fig:cactus-contraction} gives an example for a graph, with its minimum cuts and the corresponding cactus graph representation. 
  Dinitz~et~al.~\cite{Dinitz2011OnTS} have shown that all minimum cuts of a connected graph $G=(V,E)$ can be represented as a cactus graph $C=(V_c,E_c)$. For more details on the computation of the cactus graph, we refer the reader to~\cite{ cactus-state-of-the-art}. %
   Analogous to the link graph $G_L$, we define the cactus link graph $C_L=(V_c,L_c)$, with $L_c\subset L$ where the links are mapped from $G$ to $C$ and for each vertex pair we only keep the cheapest link.
 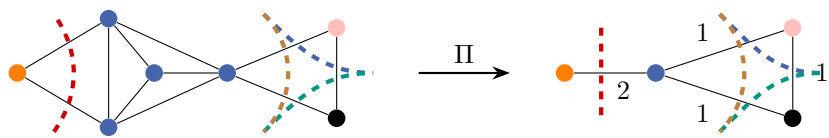
\begin{figure}[bt!]
	\caption{A graph and its weighted cactus graph with corresponding minimum cuts drawn as dashed lines of same color. Vertex colors encode the function $\Pi$.}
	\label{fig:cactus-contraction}
    \centering
    \begin{tikzpicture}[scale=1.2, main/.style = {fill,circle,scale=0.7}]
        \node[main,orange] (1) at (0, 0) {};
        \node[main,blue] (2) at (1, -0.6) {};
        \node[main,blue] (3) at (1, 0.6) {};
        \node[main,blue] (4) at (2.3, 0.) {};
        \node[main,blue] (5) at (1.5, 0.) {};
        \node[main,pink] (6) at (3.5, 0.5) {};
        \node[main] (7) at (3.5, -0.5) {};
        \coordinate (1-2) at (0.4, -0.65);
        \coordinate (1-3) at (0.4, 0.65);
        \coordinate (4-6) at (2.7, 0.65);
        \coordinate (4-7) at (2.7, -0.65);
        \coordinate (6-7) at (3.9, 0);
        
        \draw (1) -- (2);
        \draw (1) -- (3);
        \draw (2) -- (3);
        \draw (2) -- (4);
        \draw (3) -- (4);
        \draw (2) -- (5);
        \draw (3) -- (5);
        \draw (4) -- (5);
        \draw (4) -- (6);
        \draw (4) -- (7);
        \draw (6) -- (7);
        \draw[dashed,color=red,ultra thick] (1-2) to[out=55, in=-55](1-3);
        \draw[dashed,color=blue,ultra thick] (4-6) to[out=-45, in=180](6-7);
        \draw[dashed,color=green,ultra thick] (4-7) to[out=45, in=180](6-7);
        \draw[dashed,color=brown,ultra thick] (4-6) to[out=315, in=45](4-7);

        \coordinate (as) at (4.4,0);
        \coordinate (ae) at (5.4,0);
        \draw[-Stealth,thick, scale=1.2] (as) -- (ae) node[midway,above] {\text{$\Pi$}};

        \node[main,orange] (8) at (6, 0) {};
        \node[main,blue] (9) at (7, 0.) {};
        \node[main,pink] (10) at (8.5, 0.5) {};
        \node[main] (11) at (8.5, -0.5) {};
        \coordinate (8-9-top) at (6.4, 0.5);
        \coordinate (8-9-bottom) at (6.4, -0.5);
        \coordinate (9-10) at (7.7, 0.65);
        \coordinate (9-11) at (7.7, -0.65);
        \coordinate (10-11) at (8.8, 0);

        \draw (8) -- node [below,xshift=5]{2} (9);
        \draw (9) -- node [above,xshift=-8]{1} (10);
        \draw (9) -- node [below,xshift=-8]{1} (11);
        \draw (10) -- node [right,xshift=5]{1} (11);
        \draw[dashed,color=red,ultra thick] (8-9-top) to (8-9-bottom);
        \draw[dashed,color=blue,ultra thick] (9-10) to[out=-45, in=180](10-11);
        \draw[dashed,color=green,ultra thick] (9-11) to[out=45, in=180](10-11);
        \draw[dashed,color=brown,ultra thick] (9-10) to[out=315, in=45](9-11);
    \end{tikzpicture}
\end{figure}

 \section{Related Work} \label{sec:related_work}
This paper is a summary and extension of the master theses
\cite{wca-thesis}. In this section, we present the state-of-the-art for computing all minimum cuts of a graph in the cactus graph representation, followed by research and implementations in the field of connectivity augmentation problems. 

\subsection{Minimum Cuts}
Computing all minimum cuts is usually a fundamental step in connectivity augmentation.
Nagamochi, Nakao and Ibaraki presented an efficient algorithm to compute all minimum cuts in the cactus graph representation~\cite{Nagamochi2000AFA}. A cactus representation can be computed in ${\mathcal{O}(mn + n^2 \log n + n^*m \log n)}$ time where $n^*$ is the number of cycles in the cactus representation. They observed that all minimum cuts between two vertices $s$ and $t$ can be computed by running a maximum $s$-$t$-flow algorithm, and edges that are cut by no minimum cut \hbox{can be contracted.}

The current state-of-the-art algorithm to compute all minimum cuts is \textsc{VieCut} by Henzinger~\etal~\cite{viecut,cactus-state-of-the-art}. It uses linear time edge contraction based reduction rules and an optimized version of the algorithm by Nagamochi, Nakao and Ibaraki. For example, an edge $uv$ can be contracted if the connectivity between $u$ and $v$ is larger than the minimum cut. Such edges could be found by computing $k$ edge-disjoint spanning trees where $k$ is the size of the minimum cut~\cite{viecut,nagamochi-mincut}. Furthermore, reduction rules by Padberg and Rinaldi~\cite{mincut-reduction-rules} were adapted from the problem of finding one minimum cut to the problem of finding all minimum cuts. Lastly, edges that form a trivial minimum cut are contracted and remembered. These cuts are reintroduced at the end of the algorithm. The reduction rules are used exhaustively as long as a significant number of edges is contracted. The remaining kernel is solved based on the algorithm by Nagamochi, Nakao and Ibaraki~\cite{Nagamochi2000AFA}. In this work, we use \textsc{VieCut}~\cite{viecut,cactus-state-of-the-art} to compute the cactus representation of a graph, which has a much better performance in practice than what the worst-case analysis predicts. Henceforth, when we analyse the complexity of an algorithm, we do not include the complexity to compute the cactus, as this is the same for all algorithms.\vfill

\subsection{Connectivity Augmentation}
There have been several approximation algorithms for the connectivity augmentation problem in the past. An early approach is using minimum cost arborescences, which was introduced by Frederickson and Ja'Ja'~\cite{heuristic-arborescence} for bridge connectivity augmentation, the case where the graph is 1-connected but not 2-connected, and generalized for the WCAP by Watanabe~\etal~\cite{watanabe-heuristics}. The bridge connectivity algorithm results in a 2-approximation while the generalization \hbox{cannot guarantee an approximation factor.}
There are well-known 2-approximations for the WCAP. One possibility was discovered in 1992 and reduces the problem to a directed version by replacing each undirected edge with two directed edges~\cite{2-approx_1}. The directed version can be solved in polynomial time based on minimum-cost flows~\cite{2-approx-flow} or by using a linear program which has integral solutions for the cactus augmentation problem~\cite{tap}.
Another approach involves the LP relaxation of an ILP formulation combined with iterative \hbox{rounding techniques~\cite{2-approx-ilp}}. %

Only recently, progress has been made on various approximation algorithms regarding special cases of the connectivity augmentation problem, as well as the general case. \hbox{For the unweighted} version of the connectivity augmentation problem the first approximation with factor below 2 was found in 2020 by Byrka~\etal~\cite{unweighted-below-2,unweighted-below-2-2}. The WCAP is reduced to the steiner tree problem, for which a specialized approximation gives an \hbox{approximation factor of 1.91.}

For the tree augmentation problem (TAP), where the cactus graph is a tree, and the unweighted connectivity augmentation problem an approximation factor of 1.393 was found in 2021~\cite{tap}. For the weighted tree augmentation problem, Traub and Zenklusen~\cite{wtap} discovered a (${1+\ln 2+\epsilon}$)-approximation, which builds upon the 2-approximation reducing the problem to a directed one and greedily improves this solution. Afterwards, they transferred the algorithm to the weighted connectivity augmentation problem and refined it to a ($1.5+\epsilon$)-approximation~\cite{wcap}, which improves an arbitrary solution through local search.
However, no implementations or experimental results of those algorithms existed until this point.

There has been recent work on randomized Monte Carlo algorithms that give a solution with high probability on graphs with integer edge weights based on maximum flow computations.
The state-of-the-art is an $\tilde{\mathcal{O}}(m)$ time algorithm that gives a near-linear running time by Cen~\etal~\cite{near-linear}. This shows that the connectivity augmentation problem is simpler than the maximum flow problem as there is no known $\tilde{\mathcal{O}}(m)$ time maximum flow algorithm.

\textbf{Experimentally Evaluated Algorithms.}
There have been practically applicable heuristic algorithms in the past, however, there has not been much progress on the WCAP recently. Watanabe~\etal~\cite{watanabe-fifth-apx,watanabe-matching,watanabe-heuristic-2,watanabe-heuristics} proposed five different approaches, called \textsc{FSA}, \textsc{MW}, {\FSM}, {\SMC} and {\HBD}, including experimental evaluation.
An observation used for all algorithms is that there is a subset of all vertices of the cactus graph representation that must be an endpoint in any augmentation. The algorithm \textsc{FSA} uses minimum cost arborescences based on the ideas of Frederickson and Ja'Ja'~\cite{heuristic-arborescence}. \textsc{MW} is a 2-approximation also based on arborescences. {\FSM} is based on maximum cost matchings. 
The third approach, {\SMC}, is a greedy strategy adding the cheapest incident link for each vertex of the cactus graph representation. {\HBD} is a combination of {\FSM} and {\SMC}.
Experimental results showed that the solution quality of {\FSM} is the best, followed by {\HBD}, {\SMC}, \textsc{FSA} and lastly \textsc{MW}~\cite{watanabe-fifth-apx,watanabe-heuristics}. Regarding running time, {\SMC} is the fastest algorithm, followed by \textsc{FSA}, {\HBD}, {\FSM} and lastly \textsc{MW}. {\HBD} is considered the best general algorithm, because it prevents arbitrary bad solutions that may be produced by {\FSM} or {\SMC}. \textsc{MW} is the only algorithm with a guaranteed approximation factor, however, in practice it is slower and the solution quality of the other algorithms is better~\cite{watanabe-fifth-apx}. To the best of our knowledge, no other experimentally evaluated algorithms are mentioned \hbox{in the literature.} \vfill

\section{Approximation Algorithms}\label{sec:apx}
We now describe the approximation algorithms implemented in this paper in more detail. First we describe the 2-approximation~\cite{tap} followed by brief discussions on two approximation algorithms with approximation factors $(1+\ln2+\epsilon)$ and $(1.5+\epsilon)$ by Traub and Zenklusen~\cite{wcap} for which we present first implementations and experimental evaluations. For further, highly detailed information see~\cite{tap, wcap}.

\textbf{LP-based 2-Approximation.}\label{sec:apx2lp}
 For the 2-approximation each undirected link $l=uv$ is replaced by two directed links ${l_1=(u,v)}$ and ${l_2=(v,u)}$, resulting in a set $\vec{L}$ with $|\vec{L}|=2\cdot|L|$. Then, the easier directed problem can be solved. Later, a directed solution is transferred to the undirected problem by replacing each link in the directed solution with the undirected one while removing duplicates. 
 To solve the directed problem, a linear program based algorithm is implemented. This approach was first proposed by Jain~\cite{2-approx-ilp}, and Cecchetto, Traub and Zenklusen~\cite{tap} provided a definition where the solution is integral for any cactus \hbox{augmentation instance}.
 Using Brand's $\mathcal{O}(N^{2.37}\log^2N\log N/\delta)$ algorithm~\cite{lp-sota-deterministic} where $N$ is the number of variables, the time complexity of the 2-approximation is $\mathcal{O}(n^{4.74}\log^2n\log n/\delta)$.

\textbf{Relative Greedy $(1+\ln2+\epsilon)$-Approximation.}
Traub and Zenklusen~\cite{wcap} presented a greedy algorithm that improves upon the 2-approximation described above. It begins by exactly reducing a cactus graph representation to a ring graph and then replacing directed links in the solution with so called shadows, to form an arborescence. These shadows are links, that have the same weight as the original links and together still form an augmentation. The algorithm proceeds by greedily substituting sets of directed links with undirected ones resulting in mixed solutions. 
If all directed links are replaced by undirected ones, the solution is a solution to the original problem. The greedy objective is the ratio of the cost of the added undirected links and the cost of the directed links that are not needed anymore.
Since this is difficult to compute, they only consider
link sets that can be constructed iteratively by a dynamic program. Furthermore, it is easier to check if a given ratio is better or worse than the optimum. The algorithm uses binary search with respect to the ratio to determine the optimum along with a set of links that achieves this ratio. For each \hbox{bisection a dynamic program is run.}

The ${(1+\ln2+\epsilon)}$-approximation algorithm has a polynomial running time, but it is computationally expensive, especially due to the dynamic program involved. Let $\alpha=4\lceil\frac{2}{\epsilon}\rceil$, then, for graphs with $n>\alpha+2$ the size of the dynamic programming table is $\mathcal{O}(n^{2\alpha+2})$. With this, the overall computational complexity of the algorithm for integer weighted problems is $\mathcal{O}(n^{4\alpha+7}\ln(OPT))$, where $OPT$ is the optimal augmentation weight. Note that the approximation ratio can only be improved if ${1 + \ln2 +\epsilon < 2}$ requiring $\alpha \geq 28$.

\textbf{Local Search $(1.5+\epsilon)$-Approximation.}
The state-of-the-art approximation algorithm by Traub and Zenklusen~\cite{wcap} is a $(1.5+\epsilon)$-approximation. A detailed description and correctness proofs can be found in~\cite{wcap}. The algorithm is based on the ideas and the dynamic program of their relative greedy $(1+\ln2+\epsilon)$-approximation described previously.
The main difference is that the algorithm does not only greedily replace all links of a directed solution with undirected links. Instead, replaced links should iteratively improve the solution and can themselves be replaced in further iterations.
The main part of this algorithm, the dynamic program, is the same as for the $(1+\ln2+\epsilon)$-approximation. Therefore, the size of the dynamic programming table is bounded by $\mathcal{O}(n^{2\alpha+2})$. 
The total running time is bounded by $\mathcal{O}(n^{4\alpha+7}/\epsilon)$. In contrast to the $(1+\ln2+\epsilon)$-approximation this is independent of the (upscaled) augmentation weight $OPT$.

\vfill \pagebreak
\section{Weighted Connectivity Augmentation Algorithms}
This section describes our heuristic approaches and our exact algorithm for the WCAP. The data structures used for these algorithms are briefly described in Appendix~\ref{sec:appendix_data_structure}.
Unlike the approximation algorithms in the previous section, the heuristics cannot give guarantees on the solution quality, but aim at being fast or giving good solutions for many real-world~cases.
For all considered algorithms we only compute a solution on the cactus graph representation which can be transformed to the solution on the original graph, as stated in Theorem~\ref{thm:sol}. 
\begin{theorem}\label{thm:sol} Dinic, Karzanov and Lomonosov~\cite{Dinitz2011OnTS}.
	Let $G$ be a graph and $L$ the set of links. Furthermore, let $C$ be the corresponding cactus graph representation of $G$ and $C_L$ be the link graph of $C$. Then, a valid solution for the WCAP on $C$ is a valid solution to the \hbox{WCAP on $G$.}
\end{theorem}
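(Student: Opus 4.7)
The plan is to reduce the theorem to the equivalent cut-covering formulation of WCAP: a link set $S$ is a valid augmentation of a graph $H$ iff for every minimum cut $c\in C_H$ there exists a link in $S$ with exactly one endpoint in $c$. With that reformulation, the task becomes showing that the cactus mapping $\Pi$ is compatible with the coverage relation.

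First, I would invoke the fundamental property of the cactus representation recalled in Section~\ref{sec:prelim}: there is a bijection between $C_G$ and $C_C$ given by $c_C\mapsto \bigcup_{v\in c_C}\Pi^{-1}(v)$, with inverse $c_G\mapsto\Pi(c_G)$. Hence it suffices to prove the pointwise statement that a link $l=uv\in L$ crosses $c_G\in C_G$ in $G$ if and only if the image link $\Pi(u)\Pi(v)$ crosses the corresponding $c_C\in C_C$ in $C$. This equivalence follows directly from the defining property of $\Pi$: $u\in c_G$ iff $\Pi(u)\in c_C$, and likewise for $v$; so the two endpoints lie on different sides in $G$ precisely when their images do so in $C$.

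Next I would address the subtlety introduced by the definition of $C_L$, where for each cactus vertex pair only the cheapest preimage link is retained. Take any $S\subseteq C_L$ that covers every $c_C\in C_C$. Identifying $C_L\subseteq L$, the same $S$ is a subset of the original link set. By the equivalence above, every $c_G\in C_G$ is crossed by some link in $S$ viewed inside $G$, so $S$ covers all minimum cuts of $G$ and is therefore a valid WCAP solution for $G$. Links $l=uv\in L$ with $\Pi(u)=\Pi(v)$ cause no trouble: such links would appear as self-loops in $C$ and cannot cross any cut of $C$, and by the same pointwise equivalence they cannot cross any minimum cut of $G$, so discarding them in the passage to $C_L$ is harmless.

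The main obstacle is genuinely only bookkeeping: one must be careful that the correspondence between $C_G$ and $C_C$ respects the convention of representing each cut by the side not containing the fixed root $r$, and that the contraction of cheapest-only cactus links does not hide an essential coverage option. Both are resolved by the observation in the previous paragraph that coverage of a cut in $G$ depends only on the $\Pi$-classes of a link's endpoints, so any link mapping to a retained cactus edge can be substituted for any other preimage without affecting feasibility on $G$. This closes the argument; cost-optimality is not claimed by the theorem and therefore does not need to be addressed.
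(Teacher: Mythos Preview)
The paper does not give its own proof of this theorem: it is stated with attribution to Dinic, Karzanov and Lomonosov~\cite{Dinitz2011OnTS} and then used as a black box, with no proof environment following the statement. Your proposal therefore cannot be compared to a proof in the paper.

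On its own merits, your argument is correct. The reformulation of WCAP as covering all minimum cuts is exactly how the paper sets things up (Section~\ref{sec:prelim}), and the key step---that a link $uv$ crosses $c_G$ in $G$ iff $\Pi(u)\Pi(v)$ crosses the corresponding $c_C$ in $C$---follows immediately from the defining property of $\Pi$ and $\Pi^{-1}$. One minor remark: you assert a bijection between $C_G$ and $C_C$, but for the direction claimed in the theorem you only need that every $c_G\in C_G$ arises as $\bigcup_{v\in c_C}\Pi^{-1}(v)$ for some $c_C\in C_C$, which is precisely what the paper records. Your treatment of the bookkeeping issues (self-loops under $\Pi$, and the restriction of $C_L$ to one cheapest link per cactus vertex pair) is also sound and matches how $L_c\subset L$ is defined in the paper.
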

\begin{wrapfigure}{T}{.6\textwidth}
	\vspace*{-0.85cm}
	\begin{minipage}{0.6\textwidth}
		\begin{algorithm}[H]
			\begin{algorithmic}
				\caption{{\gwc}}\label{algo:gwc}
				\STATE \textbf{input} $G=(V,E)$, $L$, $c:L\to\mathbb{R}_{\geq0}$
				\STATE \textbf{output} augmentation $S\subseteq L$
				\STATE \textbf{procedure} {\gwc}($G$, $L$, $c$)
				\STATE \quad $C=(V_c, E_c)\leftarrow $\textsc{cactusByVieCut}$(G)$
				\STATE \quad $C_L=(V_c,L_c)\leftarrow $\textsc{buildLinkGraph}$(V_c,L)$ 
				\STATE \quad $S\leftarrow\emptyset$
				\STATE \quad \textbf{while} $L_c\neq\emptyset$ \textbf{do}
				\STATE \quad\quad $l\leftarrow \operatorname{argmin}\{c(l)/$\textsc{coveredCuts}($C$, $l$) $| \ l \in L_c\}$
				\STATE \quad\quad $S\leftarrow S\cup\{l\}$
				\STATE \quad\quad $C\leftarrow$\textsc{updateCactus}($C$, $l$)
				\STATE \quad\quad $L_c\leftarrow\{l\in L_c\mid l\text{ covers a mincut in }C\}$
				\STATE \quad\textbf{return} $S$
			\end{algorithmic}
		\end{algorithm}
		\vspace*{-.8cm}
	\end{minipage}
\end{wrapfigure}

\subsection{Heuristic Algorithms}
In this section we introduce a new heuristic \textsc{GreedyWeightCoverage} ({\gwc}), where links are added to the solution greedily based on the costs per augmented cut, an algorithm based on minimum spanning trees called {\mstconnect} as well as a local \hbox{search~algorithm {\ls}}.

\textbf{{\gwc}.} As the problem aims at minimizing the cost of an augmentation, it is natural to add links of small weight. At the same time, we want to minimize the number of links added. This leads to the heuristic {\gwc}, where we greedily pick the link, for which the cost per augmented minimum cut, \ie $c(l)/a_l $ with $a_l=\textsc{coveredCuts}(C,l) \coloneqq |\{c\in C_G:\text{cut}(c, l)=1\}|$ is minimal. If $a_l=0$ the link $l$ is not considered. This regards listing all minimum cuts in the cactus graph computed using \textsc{VieCut}~\cite{viecut,cactus-state-of-the-art}.

A naive bound for the complexity of this is $\mathcal{O}(|V_c|^5)$, because the solution has at most $\mathcal{O}(|V_c|)$ links, and in each iteration the heuristic is computed for ${\mathcal{O}(|L_c|)=\mathcal{O}(|V_c|^2)}$ links by checking $\mathcal{O}(|V_c|^2)$ minimum cuts. Using our custom data structure we can run the algorithm in $\mathcal{O}(|V_c|^4)$. Due to space constraints we omit the description here, and refer to \cite{wca-thesis}. The bound is very pessimistic and in practice the performance of the algorithm is much better, especially since we use efficient  algorithms to enumerate all minimum cuts~\cite{cactus-state-of-the-art}.

We also tested other heuristics such as choosing the link with the smallest weight, such that at least one cut is covered, or picking an arbitrary uncovered minimum cut $c$ and choosing the smallest weight link that covers $c$. However, {\gwc} produced significantly better results, which is why we only report details of this algorithm here.

\textbf{{\mstconnect}.} After computing the cactus graph representation using \textsc{VieCut}~\cite{viecut,cactus-state-of-the-art}, the greedy strategies described above work by adding links to a set until this set is a valid connectivity augmentation. For {\mstconnect}, see Algorithm~\ref{algo:mst}, we have a different approach. Here, we start with a (possibly much larger) set of links that increases the connectivity when added to the cactus graph. Then, we reduce this link set, while keeping a valid solution.
\begin{wrapfigure}{}{.49\textwidth}
	\vspace*{-0.45cm}
	\hfill\begin{minipage}{.475\textwidth}
		\noindent
		\begin{algorithm}[H]
			\begin{algorithmic}
				\caption{{\mstconnect}}\label{algo:mst}
				\STATE \textbf{input} $G=(V,E)$, $L$, $c:L\to\mathbb{R}_{\geq0}$
				\STATE \textbf{output} augmentation $S\subseteq L$
				\STATE \textbf{procedure} {\mstconnect}($G$, $L$, $c$)
				\STATE \quad $C=(V_c, E_c)\leftarrow $\textsc{cactusByVieCut}$(G)$
				\STATE \quad $C_L\leftarrow $\textsc{buildLinkGraph}$(V_c,L)$
				\STATE \quad$L_{MST}\leftarrow$ \textsc{MST}($C_L$, $c$)
				\STATE \quad$S\leftarrow$ \textsc{sortDesc}($L_{MST}$, $c$)
				\STATE \quad\textbf{for} $ l \in {S}$ \textbf{do}
				\STATE \qquad \textbf{if} \textsc{disposableLink}$(l,C,S)$ 
				\STATE \qquad $S\leftarrow S\setminus\{l\}$
				\STATE \quad\textbf{return} $S$
			\end{algorithmic}
		\end{algorithm}	
		\vspace*{-0.85cm}
	\end{minipage}
\end{wrapfigure} 
 The complete set of links can have $\mathcal{O}(|V_c|^2)$ size. Therefore, checking if a link can be removed from a set can also be expensive which is why a small initial solution is essential. Using Theorem~\ref{thm:mst}, an intuitive starting point is a minimum spanning tree (in case of incomplete link sets a minimum spanning forest) $L_{MST}$ of the cactus link graph $C_L$.
For each link in the set $L_{MST}$, beginning with the heaviest, we check whether it can be removed in \textsc{disposableLinks} of Algorithm~\ref{algo:mst}. This process needs linear time, as shown in Theorem~\ref{thm:linearTimeCheck}. If a link is disposable, we exclude it from the solution. Figure~\ref{fig:example} (left) illustrates an example.
\begin{theorem}\label{thm:mst}
	Let ${C=(V_c,E_c)}$ be a cactus graph and ${C_L=(V_c,L_c)}$ the link graph of $C$, such that a feasible solution to the WCAP exists. Then, a minimum spanning forest $L_{MST}\subset L_c$ for $C_L$ is a feasible solution to the WCAP on $C$.
\end{theorem}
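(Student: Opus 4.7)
The plan is to reduce the statement to a standard cut-and-spanning-tree fact, using only the definition of feasibility on a cactus. First I unfold what it means for a link set $S \subseteq L_c$ to be a feasible augmentation on $C$: it is equivalent to saying that $S$ contains at least one link crossing every minimum cut $(A, \bar A)$ of $C$, since a link covers exactly the minimum cuts it crosses and any uncovered minimum cut would remain a minimum cut in $(V_c, E_c \cup S)$.

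Next I would establish the following elementary lemma about spanning forests: if $H$ is an arbitrary graph and $F$ is a spanning forest of $H$, then for every bipartition $(X, \bar X)$ of $V(H)$ the forest $F$ contains an edge crossing $(X, \bar X)$ if and only if $H$ does. One direction is immediate. For the other, suppose $uv \in E(H)$ crosses the bipartition; then $u$ and $v$ lie in the same connected component $K$ of $H$, and the restriction of $F$ to $K$ is a spanning tree of $K$. The unique $u$-$v$ path in this spanning tree must then contain at least one edge crossing from $X \cap K$ to $\bar X \cap K$, giving a crossing edge in $F$.

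With these two ingredients the claim follows. Let $(A, \bar A)$ be any minimum cut of $C$ and let $S^{*} \subseteq L_c$ be the assumed feasible solution. By the characterization above, $S^{*}$, and hence $L_c$ itself, contains a link crossing $(A, \bar A)$. Applying the spanning forest lemma to $H = C_L$ and $F = L_{MST}$ yields a link of $L_{MST}$ crossing $(A, \bar A)$. Since this argument applies to every minimum cut of $C$, the set $L_{MST}$ covers all minimum cuts and is therefore feasible. I expect no substantial obstacle: the word \emph{minimum} in minimum spanning forest plays no role in the correctness argument (any spanning forest of $C_L$ would work) and is only relevant for producing a cheap initial solution. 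The only point requiring mild care is that $C_L$ may be disconnected, which is exactly why one works with spanning forests and treats components separately in the lemma above.
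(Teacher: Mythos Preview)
Your proof is correct and follows essentially the same approach as the paper: both argue that if some link $uv\in L_c$ crosses a given minimum cut, then $u$ and $v$ lie in the same component of $C_L$, and the $u$--$v$ path in the spanning forest must cross that cut. Your presentation is slightly cleaner in that you abstract this into a general spanning-forest lemma and explicitly note that the word ``minimum'' is irrelevant for feasibility, whereas the paper splits into the connected and disconnected cases of $C_L$ and argues the latter by contradiction; but the underlying argument is the same.
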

\begin{proof}
	If the graph $C_L$ is connected, $L_{MST}$ is a tree. In this case there is a path in $C_L$ between any two vertices and therefore every cut in $C$ is covered. Therefore, $L_{MST}$ is a feasible solution to the weighted connectivity augmentation problem.	
	Assume $C_L$ is not connected and $L_{MST}$ is not a feasible solution to the WCAP, \ie the connectivity of $\tilde{C}=(V_c,E_c\cup L_{MST})$ is not increased compared to the connectivity of $C$. Then, there must be a cut $c$ which is not covered by the links in $L_{MST}$. However, since there exists a solution to the problem there must be a link $l=(u,v)$ in $C_L$ covering the cut $c$. Since $L_{MST}$ is a minimum spanning forest $u$ and $v$ must be in the same connected component. Therefore, there is a path between $u$ and $v$ in $L_{MST}$ which results in $c$ being covered. This forms a contradiction and therefore $L_{MST}$ is a solution to the weighted \hbox{connectivity augmentation problem. }
\end{proof}
\begin{theorem}\label{thm:linearTimeCheck}
	Let ${G=(V,E)}$ be a $k$-connected graph and ${C=(V_c,E_c)}$ its cactus graph representation, $L_{MST}\subset L$ a minimum spanning forest on the link graph $C_L$ and $l\in L_{MST}$. Then, we can check if ${L_{MST}\setminus \{l\}}$ is still an \hbox{augmentation in $\mathcal{O}(|V_c|)$}.
\end{theorem}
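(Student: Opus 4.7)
The plan is to reduce the question of whether $L_{MST} \setminus \{l\}$ is still an augmentation to a single structural question about the cactus $C$, and then decide that question in linear time by a DFS traversal.

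First I would exploit the MST structure. Since $L_{MST}$ is a spanning forest of $C_L$ and $l=(u,v)\in L_{MST}$, removing $l$ splits the MST component containing $l$ into two subtrees $T_u$ and $T_v$ with vertex sets $V_u$ and $V_v$. For clarity I describe the case that $L_{MST}$ is a single tree (so $V_u \cup V_v = V_c$); the genuine forest case only adds bookkeeping as sketched at the end.

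The key observation is that $L_{MST}\setminus\{l\}$ fails to be an augmentation if and only if $(V_u,V_v)$ itself is a minimum cut of $C$. Indeed, every remaining link in $L_{MST}$ has both endpoints inside $V_u$ or both inside $V_v$, and the sub-MSTs $T_u,T_v$ span $V_u,V_v$ respectively; hence any bipartition of $V_c$ that non-trivially splits $V_u$ or $V_v$ is crossed by some remaining MST link. The only candidate cut left uncovered after deleting $l$ is therefore the one where $V_u$ and $V_v$ lie entirely on opposite sides, namely $(V_u,V_v)$.

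It remains to decide in $\mathcal{O}(|V_c|)$ time whether $(V_u,V_v)$ is a minimum cut of $C$. In a cactus, every minimum cut is induced either by removing a single bridge or by removing a pair of edges lying on the same cycle. I would root $C$ at an arbitrary vertex and run one DFS that, for each vertex $w$, precomputes the number of $V_u$-vertices contained in the subtree rooted at $w$; this preprocessing takes $\mathcal{O}(|V_c|)$ time. For each bridge the test then reduces to comparing that subtree count against $|V_u|$, and for each cycle a single walk around the cycle, maintaining a running count of $V_u$-vertices contained in the subtrees hanging off the cycle visited so far, decides whether some pair of cycle edges separates $V_u$ from $V_v$. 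Because a cactus has $\mathcal{O}(|V_c|)$ edges and the total cycle length is $\mathcal{O}(|V_c|)$, summing over all bridges and cycles keeps the total work in $\mathcal{O}(|V_c|)$.

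The main obstacle is the cycle case: one has to verify carefully that a single linear scan around each cycle is enough to detect a pair of cycle edges whose removal produces precisely the partition $(V_u,V_v)$, once the entire subtree hanging off each cycle vertex is aggregated into that vertex. The bridge case and the subtree counting are routine. For the forest generalization, in which $L_{MST}$ splits into several components after removing $l$, one assigns a color to each component and checks during the same cactus DFS that for every candidate minimum cut the colors of $V_u$ and $V_v$ end up on opposite sides and no other color class is split; this requires only constant additional bookkeeping per cactus edge and remains linear.
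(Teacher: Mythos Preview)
Your route differs substantially from the paper's. The paper never isolates a candidate cut; it tests directly whether the $u$--$v$ connectivity in $C'=(V_c,\,E_c\cup L_{MST}\setminus\{l\})$ exceeds $k$, via Ford--Fulkerson: at most two augmenting-path rounds restricted to cactus edges already push flow $k$ (every cactus edge has weight $k$ or $k/2$), and one further iteration on all of $C'$ decides whether the flow can be raised above $k$. Since $|E_c\cup L_{MST}|=O(|V_c|)$, each round is linear, and the argument treats the tree and forest cases uniformly. Your structural reduction buys a cleaner picture in the spanning-tree case, and can in fact be shortened: once you know $(V_u,V_v)$ is the unique bipartition uncrossed by the remaining links, there is no need to enumerate bridges and cycle pairs at all---just sum the weights of the cactus edges with one endpoint in each side and compare to $k$, which is already $O(|E_c|)=O(|V_c|)$.

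The gap is the genuine forest case. There the uncovered cut is no longer unique: any minimum cut of $C$ that separates $V_u$ from $V_v$ while keeping every other component $V_i$ wholly on one side is a witness, and you must decide whether \emph{some} such cut exists. Your claim that this ``requires only constant additional bookkeeping per cactus edge'' is not justified. For a bridge cut, ``no colour class is split'' is equivalent to ``no link of $L_{MST}\setminus\{l\}$ crosses that bridge'', and obtaining this for all bridges in $O(|V_c|)$ total already needs an LCA/path-sum argument on the cactus that you have not set up. The cycle case is harder still: with many colours you must detect whether some pair of cycle edges isolates an arc crossed by no projected link, and that is not a single running-count sweep. This can all be made linear, but it is real work your sketch omits, whereas the paper's three-round flow argument sidesteps the case analysis entirely.
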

\begin{proof}
	A link $l=(u,v)\in L_{MST}$ can be removed, \ie is disposable, if $L_{MST}\setminus \{l\}$ is still an augmentation. 
	Therefore, we check the connectivity between $u$ and $v$ in the graph $C'=(V_c,E_c\cup L_{MST}\setminus\{l\})$ using a maximum flow algorithm. A link $l=(u,v)$ can be removed, if the $u$-$v$-flow and therefore the minimum $u$-$v$-cut in $C'$ is larger than $k$. Let $m=|E_c \cup L_{MST}|$.
	To compute a maximum $u$-$v$-flow we use Ford-Fulkerson with complexity $\mathcal{O}(m\cdot f)$, where $f$ is the flow value~\cite{ford-fulkerson}. 
	We can improve this factor by initially computing augmenting paths only using edges from the cactus graph. Since these edge weights are in $\{\frac{k}{2},k\}$, there can be at most two rounds of augmenting paths, each with weight at least $\frac{k}{2}$. Therefore, this can be done in $\mathcal{O}(|V_c|)$ time.
	Afterwards, to determine whether the $u$-$v$-connectivity in $C'$ is larger than $k$, we only need to do one iteration of Ford-Fulkerson which takes $\mathcal{O}(m)$ time.
	As $C'$ is the union of a cactus and a minimum spanning tree, it holds $m\in \mathcal{O}(|V_c|)$ and we can therefore determine if a link $l=(u,v)$ is \hbox{disposable within $\mathcal{O}(|V_c|)$.}
\end{proof}
A minimum spanning forest on the cactus link graph $C_L$ is computed using Kruskal's algorithm with a complexity of $\mathcal{O}(m\log m)$~\cite{kruskal}, where $m$ is the number of links in $C_L$.
Checking if $\mathcal{O}(|V_c|)$ links can be removed from the solution is done in $\mathcal{O}(|V_c|^2)$, see Theorem~\ref{thm:linearTimeCheck}. This results in an overall complexity of $\mathcal{O}(|V_c|^2\log |V_c|)$ for {\mstconnect}. 

As in {\gwc}, we also tested the cost per augmented minimum cut $c(l)/a_l$ as weight to compute the minimum spanning forest as well as using $c(l)/a_l$ to sort the links to be removed in Algorithm~\ref{algo:mst}.

\begin{wrapfigure}{H}{.5\textwidth}
	\vspace*{-0.85cm}
	\begin{minipage}{.475\textwidth}
		\noindent
		\begin{algorithm}[H]
			\begin{algorithmic}
				\caption{{\ls}}\label{algo:ls}
				\STATE \textbf{input} $G=(V,E)$, $L$, $c:L\to\mathbb{R}_{\geq0}$, $S$
				\STATE \textbf{output} improved augmentation $S'\subseteq L$
				\STATE \textbf{procedure} {\ls}($G$, $L$, $c$, $S$)		
				\STATE \quad $C=(V_c, E_c)\leftarrow $\textsc{cactusByVieCut}$(G)$
				\STATE \quad $L'\leftarrow$ \textsc{reduceLinkSet}($C$, $L$, $c$)
				\STATE \quad $X\leftarrow$ \textsc{getSwapCands}($L'$, $S$)
				\STATE \quad $S'\leftarrow$ $S$
				\STATE \quad \textbf{while} $X\neq\emptyset$ \textbf{do}
				\STATE \quad \quad $(\Lin,\Lout) \leftarrow $ \textsc{bestCandidate}($X$)
				\STATE \quad \quad $X\leftarrow X\setminus(\Lin,\Lout)$
				\STATE \quad \quad \textbf{if} \textsc{isSwapValid}($\Lin$, $\Lout$)
				\STATE \quad \quad \quad $S\leftarrow$ \textsc{swap}($S'$, $\Lin$, $\Lout$)
				\STATE \quad \quad \quad $X\leftarrow$ \textsc{getSwapCands}($L'$, $S'$)
				\STATE \textbf{return} $S'$
			\end{algorithmic}
		\end{algorithm}
		\vspace*{-0.85cm}
	\end{minipage}
\end{wrapfigure}
 
However, our experiments showed that using the weight $c(l)$ of the links performed significantly better in both cases and thus the other versions are omitted.

\textbf{{Local Search}.}
 The core idea of our local search algorithm {\ls} is the following. We want to remove links from our solution and replace them with a lighter set of links. The parameter $k$ limits the number of links within such a swap. Since these swaps can result in infeasible solutions, we have to check feasibility for each swap. This way a non-optimal solution $S\subset L$ can be improved. Next we describe the different steps of {\ls} in detail. We give an overview in Algorithm~\ref{algo:ls}.
We first compute the cactus graph representation $C$ of $G$ using \textsc{VieCut}~\cite{viecut,cactus-state-of-the-art}. Afterwards, we reduce the link set $L$ to the union of $t$ disjoint minimum spanning forests in the link graph $C_L$. Here, the next minimum spanning forest is computed on $C_L$ after removing the links of the previously computed minimum spanning forest. This way the average degree is a small constant depending on $t$. We set $t=2$, since larger values did not \hbox{improve the performance.}

The main idea of the remaining part in {\ls} is the following. We search for non-solution links  $\Lin \subset L\setminus S$ and solution links $\Lout \subset S$ with $|\Lin| \leq k_{in}$, $|\Lout| \leq k_{out}$ and $k_{in} + k_{out} \leq k$. If the swap $(L_{in},L_{out})$ is feasible, we can create a new improved solution by swapping these sets, \ie $S' \coloneqq S \setminus \Lout \cup \Lin$. In \textsc{getSwapCandidates} we compute possible swap candidates $(\Lin,\Lout)$. These have to fulfil the following conditions:
\begin{itemize}
	\item $\Lin \cup \Lout$ form an alternating path from $\Lin$ and $\Lout$ of length at most $k_{in} + k_{out} = k$
	\item $c(\Lin) - c(\Lout) < 0$, \ie the swap improves the solution
	\item non-triviality, \ie $\forall v\in V(\Lout):\{v\}$ is no minimum cut
\end{itemize}

\begin{figure}[t]
	\caption{Example of applying {\mstconnect} (right) and improving its solution with our local search {\lsthree}. All links have weight 1. Green edges represent the current solution, black dashed edges are the edges of the cactus graph representation, light gray edges are the non-solution links. The dashed green ($\Lout$) and red ($\Lin$) edges show the swap found in the local search.}
	\label{fig:example}
	\begin{tikzpicture}
    \definecolor{darkgreen}{rgb}{0.0, 0.5, 0.0}
	\definecolor{lightgreen}{rgb}{0.235, 0.7, 0.235}

	\tikzstyle{mynode} = [circle, fill=black, inner sep=0pt, minimum size=4pt]
	\tikzstyle{circ} = [black, dashed]
	\newcommand{\drawgraph}[2]{
		\foreach \x in {0,45,...,315} {
			\node[mynode] (\x#2) at (\x:1cm) {};
		}
		
		\foreach \x in {0,45,...,315} {
			\foreach \y in {0,45,...,315} {
				\draw [lightgray, thin] (\x#2) -- (\y#2);
			}
		}
					
		\draw [circ] (0#2) -- (45#2) -- (90#2) -- (135#2) --(180#2) --(225#2) --(270#2) --(315#2) --(0#2);
		
		\ifnum#1=1
		\draw [darkgreen, ultra thick] (180#2) -- (0#2);
		\draw [darkgreen, ultra thick] (45#2) -- (135#2);
		\draw [darkgreen, ultra thick] (225#2) -- (315#2);
		\draw [darkgreen, dashed, ultra thick] (270#2) -- (45#2);
		\draw [darkgreen, dashed, ultra thick] (225#2) -- (90#2);
		\draw [red, dashed, ultra thick] (270#2) -- (90#2);
		\fi
		\ifnum#1=2 		
		\draw [darkgreen, ultra  thick] (180#2) -- (0#2);
		\draw [darkgreen, ultra thick] (45#2) -- (135#2);
		\draw [darkgreen, ultra thick] (225#2) -- (315#2);
		\draw [darkgreen, ultra thick] (270#2) -- (90#2);
		\fi
		\ifnum#1=3
		\draw [darkgreen, ultra  thick] (180#2) -- (0#2);
		\draw [darkgreen, ultra  thick] (270#2) -- (90#2);
		\draw [darkgreen, ultra thick] (225#2) -- (135#2);
		\draw [darkgreen, ultra thick] (45#2) -- (180#2);
		\draw [darkgreen, ultra thick] (90#2) -- (315#2);
		\draw [red, ultra thick] (45#2) -- (135#2);
		\draw [red, ultra thick] (225#2) -- (315#2);
		\fi	
		\ifnum#1=4 %
		\draw [darkgreen, ultra thick] (180#2) -- (0#2);
		\draw [darkgreen, ultra thick] (45#2) -- (135#2);
		\draw [darkgreen, ultra thick] (225#2) -- (315#2);	
		\draw [darkgreen, ultra thick] (270#2) -- (45#2);
		\draw [darkgreen, ultra thick] (225#2) -- (90#2);
		\draw [darkgreen, ultra thick] (225#2) -- (45#2);
		\draw [darkgreen, ultra thick] (0#2) -- (270#2);
		\fi
		\ifnum#1=5 %
		\draw [darkgreen, ultra thick] (180#2) -- (0#2);
		\draw [darkgreen, ultra thick] (45#2) -- (135#2);
		\draw [darkgreen, ultra  thick] (225#2) -- (315#2);		
		\draw [darkgreen, ultra thick] (270#2) -- (45#2);
		\draw [darkgreen, ultra thick] (225#2) -- (90#2);
		
		\fi
	}
	
	\begin{scope}[xshift=-9.75cm]
	\drawgraph{4}{E} %
	\node[below=1.2cm] at ($(0E) - (1,0)$) {$L_{MST}$};
	\end{scope}
	\node at (-7.6cm,0.6cm) {drop};	
	\draw [-Stealth] (-8.5cm, 0) -- node[above] {disposable} node[below] {links} (-6.7cm, 0);		
	
	\begin{scope}[xshift=-5.4cm]
		\drawgraph{5}{D} %
	\end{scope}
	\node[below=1.2cm] at ($(0D) - (1,0)$) {\mstconnect};
	\node[below=1.7cm] at ($(0D) - (1,0)$) {Solution};
	
	\draw [-Stealth] (-4.1cm, 0) -- node[above] {improve} node[below] {} (-2.8cm, 0);
	
	\begin{scope}[xshift=-1.5cm]
		\drawgraph{1}{A} %
		\node[below=1.2cm] at ($(0A) - (1,0)$) {\lsthree};
		\node[below=1.7cm] at ($(0A) - (1,0)$) {Step};
	\end{scope}

	\draw [-Stealth] (-0.3cm, 0) -- node[above] {swap} (.7cm, 0);
	
	\begin{scope}[xshift=2cm]
		\drawgraph{2}{B} %
		\node[below=1.2cm] at ($(0B) - (1,0)$) {Optimal};
		\node[below=1.7cm] at ($(0B) - (1,0)$) {Solution};
	\end{scope}
\end{tikzpicture}
\end{figure}
	We restrict the search for swaps to alternating paths. Since endpoints of these paths in $\Lout$ lose an adjacent link, it is likely that a new adjacent link is necessary to keep the cut covered. Interior links of the path cover two vertices at once.
	Utilizing the specified criteria, we determine the sets $\Lin$ and $\Lout$ within the \textsc{getSwapCandidates} function. This is achieved through an adapted depth-first search process, where traversal is restricted to alternate edges that are part of the solution and those that are not. The depth of the search is restricted by the parameter $k$. This modified search is executed starting at each vertex in $|V_c|$. At every vertex encountered, we verify the adherence to the second and third condition. If they are met, the path currently identified is included in the collection of potential swap candidates.
	These conditions are not sufficient to form a feasible swap, they rather help to prune the set of possible candidates in advance. For the best candidate, \ie the candidate where $c(\Lin) - c(\Lout)$ is lowest, we check the validity of the swap in \textsc{isSwapValid}. Using a maximum flow approach, similar to {\mstconnect}, this check can be done efficiently as shown in Theorem~\ref{thm:linearTimeCheck}. If the swap is valid, we swap links, recompute possible candidates and start over. If the swap is not valid, we check the next best candidate. Figure~\ref{fig:example} shows an example of how the $\lsthree$ algorithm improves the solution computed by {\mstconnect}. We further improved the algorithm {\ls} by integrating path caching, where we maintain vertex sequences of a path already checked using a hash table. 	Using Theorem~\ref{thm:ls}, the running time for {\lsthree} is $\mathcal{O}(|V_c|^3\times \Delta^2 + |V_c|^4)$ and for {\lsfive} $\mathcal{O}(|V_c|^4\times \Delta^2 + |V_c|^5)$. 
\begin{theorem}\label{thm:ls}
	{\ls} has a complexity of $\mathcal{O}(|V_c|^2(|V_c|^{\lfloor\frac{k}{2}\rfloor}\times \Delta^2 + |V_c|^{\lceil\frac{k}{2}\rceil}))$.
\end{theorem}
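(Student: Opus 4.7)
The plan is to split the running time of {\ls} into a preprocessing part and the main while loop, and to bound each separately. Following the paper's convention, the cost of \textsc{cactusByVieCut} is excluded from the stated bound. The only nontrivial preprocessing is \textsc{reduceLinkSet}, which runs Kruskal's algorithm a constant number $t$ of times on the cactus link graph $C_L$ with $O(|V_c|^2)$ edges, costing $O(|V_c|^2 \log |V_c|)$; this term will be absorbed by the main loop.

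Next, I would bound a single invocation of \textsc{getSwapCands}. Every candidate is an alternating path of length at most $k$ in $L'$, whose edges switch between $S$ and $L' \setminus S$. Because $L'$ is the union of $t = O(1)$ disjoint spanning forests, each step along a non-solution link has at most $\Delta$ neighbours, whereas a step along a solution link is bounded by $|V_c|$ in the worst case. A path of length $k$ whose two terminal edges lie in $L' \setminus S$ contributes a factor $\Delta^2$ from its endpoints and $|V_c|^{\lfloor k/2 \rfloor}$ from its $\lfloor k/2 \rfloor$ interior alternations; paths starting and ending on solution edges contribute $|V_c|^{\lceil k/2 \rceil}$. Summed over the $|V_c|$ possible starting vertices, the total number of candidate paths is bounded by $O(|V_c| \, (|V_c|^{\lfloor k/2 \rfloor} \Delta^2 + |V_c|^{\lceil k/2 \rceil}))$. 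The path caching described immediately before the theorem guarantees that this enumeration is not duplicated across successive rebuilds of $X$.

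For the rest of the main loop, \textsc{bestCandidate} reduces to a priority-queue extraction with amortised $O(\log |V_c|)$ cost, and each call to \textsc{isSwapValid} runs in $O(|V_c|)$ by Theorem~\ref{thm:linearTimeCheck}. Every candidate produced is validated at most once over the whole execution, so multiplying the candidate count by the $O(|V_c|)$ validation cost yields $|V_c|^2 \bigl(|V_c|^{\lfloor k/2 \rfloor} \Delta^2 + |V_c|^{\lceil k/2 \rceil}\bigr)$, as claimed. The number of successful swaps does not introduce an additional factor, because each swap only removes candidates from $X$ and the cache amortises the rebuild of $X$ triggered after a successful swap.

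The main obstacle is the combinatorial accounting of alternating paths in a graph in which one edge class ($L' \setminus S$) has bounded degree $\Delta$ but the other ($S$) may have degree as large as $|V_c|$. Carefully isolating the contribution of the two terminal edges from the interior alternations is what produces the $\Delta^2$ factor attached to $|V_c|^{\lfloor k/2 \rfloor}$ rather than a full $\Delta^{\lceil k/2 \rceil}$ factor. A secondary subtlety is making the amortisation of the $X$ rebuilds rigorous, which rests on the path-cache mechanism introduced just before the theorem statement.
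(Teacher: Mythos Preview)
Your argument diverges from the paper's in two places, and in both the reasoning does not go through as written.

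\textbf{Counting alternating paths.} You obtain the factor $|V_c|^{\lfloor k/2\rfloor}$ by asserting that ``a step along a solution link is bounded by $|V_c|$ in the worst case''. But from a fixed vertex any step in $L'\cup S$ has at most $\Delta$ neighbours, not $|V_c|$; your DFS from a single start vertex actually explores at most $\Delta^{k}$ paths, which is not the bound you want. The paper's $|V_c|$ factors have a different origin: since $S$ is (essentially) a spanning forest, $|S|=O(|V_c|)$, so there are $O(|V_c|^{\lfloor k/2\rfloor})$ resp.\ $O(|V_c|^{\lceil k/2\rceil})$ global choices for the multiset of solution links $\Lout$ used in a length-$k$ path. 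For a fixed choice of $\Lout$ the interior non-solution links are then forced, and only the two terminal non-solution links (when present) contribute a factor $\Delta$ each. This gives $|X|=O(|V_c|^{\lfloor k/2\rfloor}\Delta^{2}+|V_c|^{\lceil k/2\rceil})$ directly, with no additional sum over starting vertices.

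\textbf{Where the outer $|V_c|^{2}$ comes from.} The paper does \emph{not} amortise via the path cache. It simply bounds the number of recomputations of $X$ by $|V_c|$ (one per successful swap), so the total number of validity checks is at most $|V_c|\cdot|X|$, each costing $O(|V_c|)$; hence $O(|V_c|^{2}|X|)$. Your alternative---claiming that path caching guarantees every candidate is validated at most once across all rebuilds---is not justified: after a swap the solution $S$ changes, so genuinely new alternating paths (new vertex sequences) are created, and a cached ``invalid'' verdict need not persist. The cache in the paper is a practical speed-up, not the device that yields the stated bound. If you want to rescue your amortised approach you would need a separate argument bounding the total number of \emph{distinct} alternating paths that ever arise over the course of all swaps, and that is not what you have written.
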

\begin{proof}
	The running time for {\ls} is dominated by the while loop in Algorithm~\ref{algo:ls}. We have at most $|V_c|$ re-computations of the set $X$. Therefore, we need to check validity for at most $|V_c|\times|X|$ swaps. This results in a running time of $\mathbb{O}(|V_c|^2\times |X|)$. 
	To estimate $|X|$ we only consider the swaps using exactly $k$ links, since these are the dominating factor. Each path of length exactly $k$ has to have at least $k_1= \lfloor\frac{k}{2}\rfloor$ and at most $k_2 = \lceil\frac{k}{2}\rceil$ edges from the current solution $S$. There can be $\mathcal{O}(|V_c|^{k_1}+ |V_c|^{k_2})$ swaps in $X$ using different links from $S$. Now we estimate the number of different paths using the same links from $S$. Let $k$ be odd and there are $k_2$ equal solution links in the path. Then, there can only be one such path, since the endpoints are fixed. If there are only $k_1$ equal solution links in the path, then both endpoints can vary. This results in at most $\Delta^2$ different paths, where $\Delta$ is the max degree in $(V_c,L'\cup S)$. When $k$ is even, one endpoint of the path is free and there are at most $\Delta$ different paths. Overall, this yields $|X| \in \mathcal{O}(|V_c|^{\lfloor\frac{k}{2}\rfloor}\times \Delta^2 + |V_c|^{\lceil\frac{k}{2}\rceil})$ and results in a complexity for {\ls} of $\mathcal{O}(|V_c|^2(|V_c|^{\lfloor\frac{k}{2}\rfloor}\times \Delta^2 + |V_c|^{\lceil\frac{k}{2}\rceil}))$. 
\end{proof}

\subsection{Exact Algorithm}
We now describe an efficient implementation of an integer linear program for the problem, which can be used to solve (small) instances to optimality. The formulation is inspired by the linear program used for the 2-approximation in~\cite{tap}. To the best of our knowledge, this formulation itself has not been used before. %
For the formulation of the WCAP, we introduce binary variables $x\in L$ that decide which links are added to the augmentation. The objective given in~\eqref{eq:ilp} sums up the weight of all selected links. The constraint ensures that each minimum cut in the graph $G$ is increased by at least one link added to the augmentation.
\begin{equation}
	\min_{x}\sum_{l\in L}x_lc(l) \quad
	\text{s.t.}\quad \sum_{l\in L}\operatorname{cut}(c, l)x_l\geq1\quad \forall c\in C_G;  \quad
	x_l\in\{0,1\}^{|L|}. 
\label{eq:ilp}
\end{equation}

\vspace*{-.2cm} 
Our ILP formulation is run on the cactus representation and uses efficient enumeration algorithms on it to list all minimum cuts \hbox{$C_G$ of \textsc{VieCut}~\cite{cactus-state-of-the-art}}. Moreover, our solver, which we call {\ILP}, uses an initial solution computed by {\mstconnect}, which improves the running time by \numprint{5.5}\% on cycle graphs, which are the most difficult instances to solve with {\ILP}. %
We choose {\mstconnect} since it is the fastest heuristic approach already giving good \hbox{results (see Section~\ref{sec:experiments})}.

\section{Experimental Evaluation} \label{sec:experiments}

 We now experimentally evaluate the algorithms described in the previous sections. The approximation algorithms are evaluated and compared in terms of quality and running time in Section~\ref{sec:ex:apx}. Afterwards, our new proposed algorithms {\gwc}, {\mstconnect} and {\mstconnect}+{\ls} are compared to the state-of-the-art solvers {\HBD}, {\FSM} and {\SMC} in terms of solution quality, running time and memory consumption. 
 
 \textbf{Methodology.} %
 The experiments are run on a computer with an AMD~EPYC~9754~128~core processor with 256 threads running at \numprint{3.1}\,GHz and \numprint{755}\,GB of main memory running Linux. The C++ code is compiled using \lstinline|gcc 11.4.0| with optimization level \lstinline|O3|. The memory for each process is limited to \numprint{50}\,GB and the running time is limited to 3 hours. 
 We run our algorithms on each instance with 5 different seeds to generate link costs as described below or in cases
 of generated cactus graphs we create 5 different graphs with the same number of vertices and cycles.  %
 The objective, \ie the weight of the augmentation, the running time and the maximum memory used is measured. We use the geometric mean when averaging over different seeds or instances such that every instance has a comparable influence on the result.
 
 Different algorithms are compared using performance profiles~\cite{perf-prof}. These plots use the best algorithm as baseline for each instance and relate the other algorithms to this baseline. A performance profile can use the objective function to compare quality, running time and memory consumption. The $x$-axis shows a parameter $\tau\geq1$. On the $y$-axis the fraction of instances whose objective is at most $\tau\cdot\mathrm{best}$ is plotted, in particular $\#\{\mathrm{objective}\leq\tau\cdot\mathrm{best}\}/\#\mathrm{instances}$. For running time and memory usage, time and memory are used instead of the objective, respectively. At $\tau=1$ the plot shows the fraction of instances where the algorithm is able to find the best solution / has the fastest running time or lowest memory consumption.
 Some algorithms are not able to solve every instance due to constraints on memory and time. We give more details in the respective sections.
 To solve the integer linear program in {\ILP} and the linear program in the 2-approximation algorithm we used the Gurobi Optimizer~\cite{gurobi}. The integer linear program solver uses presolving and the {\mstconnect} solution as initial solution.

\textbf{Instances.} 
 The algorithms are evaluated using two different sets. The first set consists of three types of \emph{generated graph instances}: cycles or ring graphs, stars and cactus graphs. Cycles and stars represent edge cases of cactus graphs, with an amount of minimum cuts between $\mathcal{O}(|V_c|)$ and $\mathcal{O}(|V_c|^2)$. 
 To be able to test the algorithms on instances that represent more complex and larger cactus graphs, we generated cactus graphs by the algorithm described in Appendix~\ref{sec:generator}. All instances are listed in Table~\ref{tab:apx:instances} including their properties.
 The second set are \emph{real-world instances}. 
 Many real-world graphs have unique or very few distinct minimum cuts, which leads to very small cactus graphs with only a few vertices. 
 We used all connected graphs with non-trivial cactus graph representations from the 10th DIMACS Implementation Challenge~\cite{dimacs} for which the cactus graph representation has at least \numprint{100} edges and at most \numprint{40000} vertices. 
In general, we use \textsc{VieCut}~\cite{cactus-state-of-the-art} to compute the cactus representation of a graph. We only compute the cactus representation once for every graph and henceforth only report running times of the algorithms when run on the cactus representation.

\textbf{Link Sets.}
  The algorithms can perform differently based on the distribution of the link costs. 
  For instance the performance of {\mstconnect} is affected due to different structured minimum spanning trees and lower cardinality of optimum solutions make matching-based approaches like {\FSM} and {\HBD} or our greedy heuristic {\gwc} more efficient. 
Thus, we choose a uniform distribution in different intervals. First, we have a set of \textit{small link cost} choosing the costs uniformly from the sets $\{1,2\}$, $\{1,\dots,9\}$ and $\{1,\dots,99\}$ as used in \cite{watanabe-heuristics,watanabe-matching} to reproduce their results. The second set of \textit{large link costs} consists of links chosen uniformly from the set $\{1, \dots,  \numprint{100000}\}$.
Since our algorithms need link costs within the interval $[0,1]$, we scale those costs  by dividing through the largest occurring link cost. 
  Furthermore, all instances have complete link sets, \ie $G=(V,E\cup L)$ is a complete graph. 
  
 \subsection{Approximations}\label{sec:ex:apx}
 \begin{figure}[t]
		\centering
		\caption{Performance profile comparing solution quality of approximations with $\epsilon=0.15$ on tiny cycle and star instances on the left. The right plot shows the running times.}
		\label{fig:ex:apx}
                \vspace*{-.5cm}
		\includegraphics[width=.49\textwidth]{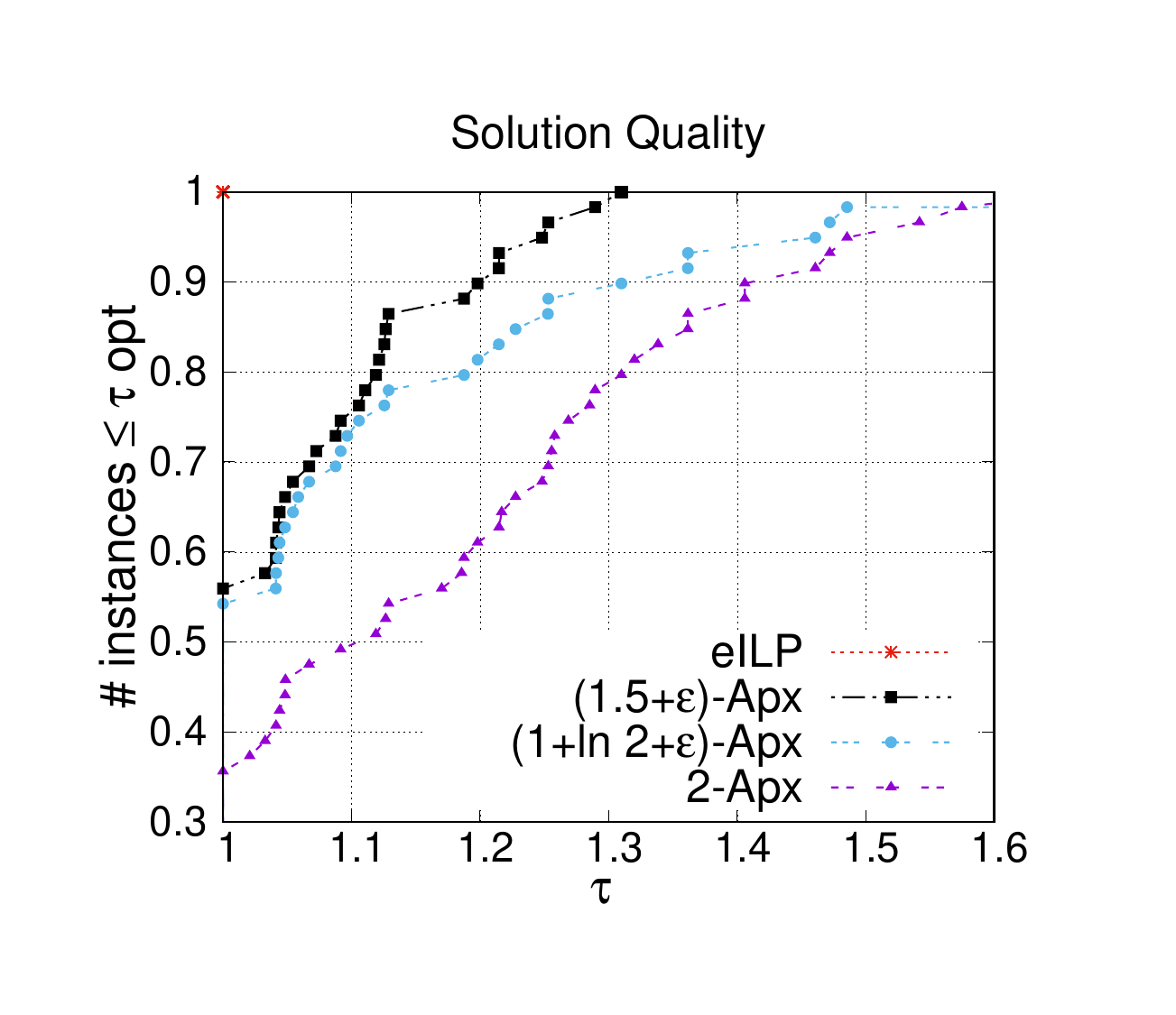} 
		\includegraphics[width=.49\textwidth]{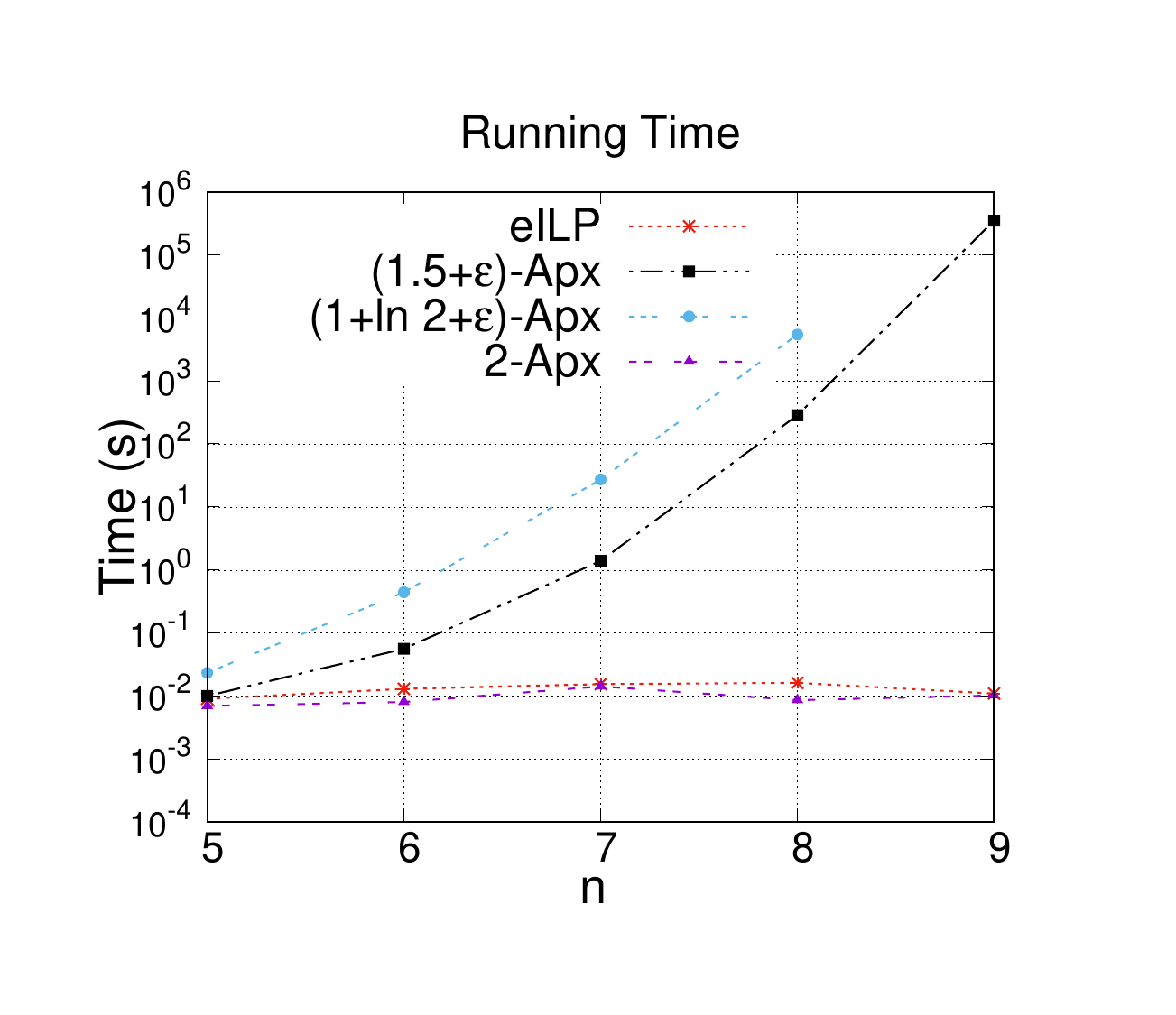} 
                \vspace*{-1.cm}
\end{figure}

We first analyse the performance of the different approximation algorithms described in Section~\ref{sec:apx}. Here, we can only consider very small graphs, as the approximation algorithms (see below) do not scale at all. 
For following experiments, we only report the results for $\epsilon=0.15$.
This is due to the fact that $\epsilon$ is only used to limit the number of considered links crossing cuts to $4\lceil \frac{4}{\epsilon} \rceil$ for the $(1.5+\epsilon)$-Approximation and $4\lceil \frac{2}{\epsilon} \rceil$ for the $(1+\ln2+\epsilon)$-Approximation. 
However, on small graphs the number of links that may cross a minimum cut is already limited by the number of vertices.
Thus, although we tested values for $\epsilon$ in $[0.1, 0.5]$, it did not yield different result in terms of running time or link cost.  

Figure~\ref{fig:ex:apx} (left) shows a performance profile of the solution quality, for tiny graphs. As the optimal solution was computed by {\ILP}, the $x$-axis gives the approximation ratio. The 2-approximation consistently gives the worst solutions, which means both, the $(1+\ln2+\epsilon)$-approximation and the $(1.5+\epsilon)$-approximation, can improve this initial solution. 
 
 Figure~\ref{fig:ex:apx} (right) shows the running time of the approximation algorithms as well as our solver {\ILP} with respect to the graph size. The graphs are cycle graphs with a complete set of links. The running time increases exponentially for the approximation algorithms. This is expected for graphs with fewer than $\alpha\geq28$ vertices as the computational complexity is exponential in~$4\min(n,\alpha)+7$.
 Both, the $(1+\ln2+\epsilon)$-approximation and the $(1.5+\epsilon)$-approximation are orders of magnitude slower than the optimal integer linear program. Hence, we do not consider them further when comparing against other state-of-the-art algorithms. 
 The 2-approximation is slightly faster than {\ILP}, but the difference is negligible and both can easily solve tiny graphs solvable by the dynamic program based approximations. 
 Overall, we \emph{conclude} that the approximations have very little relevance for solving connectivity augmentation problems and may only have theoretical value.
 Since the 2-approximation is able to solve larger graphs it is also used for the comparisons in the following. 

 \subsection{State-of-the-Art Comparison}\label{sec:ex:algorithms}
 In this section we compare our algorithms {\gwc}, {\mstconnect} and {\mstconnect} combined with {\ls} as well as {\ILP} against the performance of the 2-approximation algorithm as well as to the best,~i.e.~{\FSM} and {\HBD}, and the fastest,~i.e.~{\SMC}, state-of-the-art solvers from Watanabe~et~al.~\cite{watanabe-heuristics,watanabe-heuristic-2,watanabe-matching}.
 For the algorithms from Watanabe~et~al.~\cite{watanabe-heuristics,watanabe-heuristic-2,watanabe-matching}, neither the instances used in their experimental evaluation nor source code or binaries for the algorithms are available.\footnote{We contacted the authors, however, we did not get an answer.} Hence, we compare them against \hbox{\emph{our} implementation of these algorithms.} 
For the comparison of solution quality, running time and memory consumption we give performance profiles for generated and real-world instances in Figure~\ref{fig:ex:smalllink} (small link cost) and Figure~\ref{fig:ex:largelink} (large link cost).

\emph{General Remarks.} First note that the 2-approximation algorithm yields the worst results (highest cost) on both data sets and link costs. 
Additionally, it is also the slowest and most memory consuming algorithm. The high memory consumption results from a) the reduction of undirected to directed links, which doubles the size of the link set which roughly gives factor of two and b) more importantly from running the linear program solver. 
We now analyse results for the remaining algorithms for different link costs and \hbox{instances types separately.}

  	\begin{figure}[!t]
  		\centering
    \caption{Performance profile for the state-of-the-art comparison on solution quality, running time and memory consumption on instances with \emph{small cost links}.}
  		\label{fig:ex:smalllink}
                  \includegraphics[width=.8\textwidth]{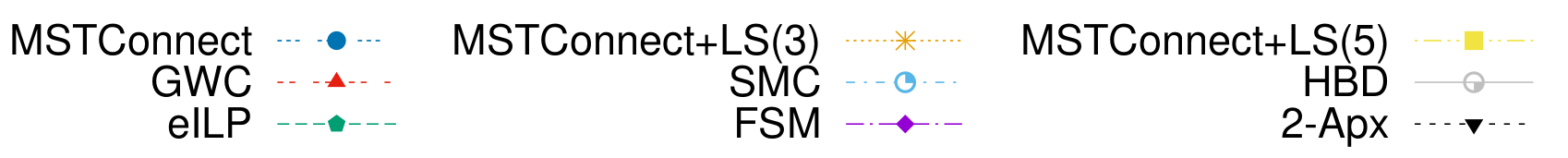}
                  \includegraphics[width=.49\textwidth]{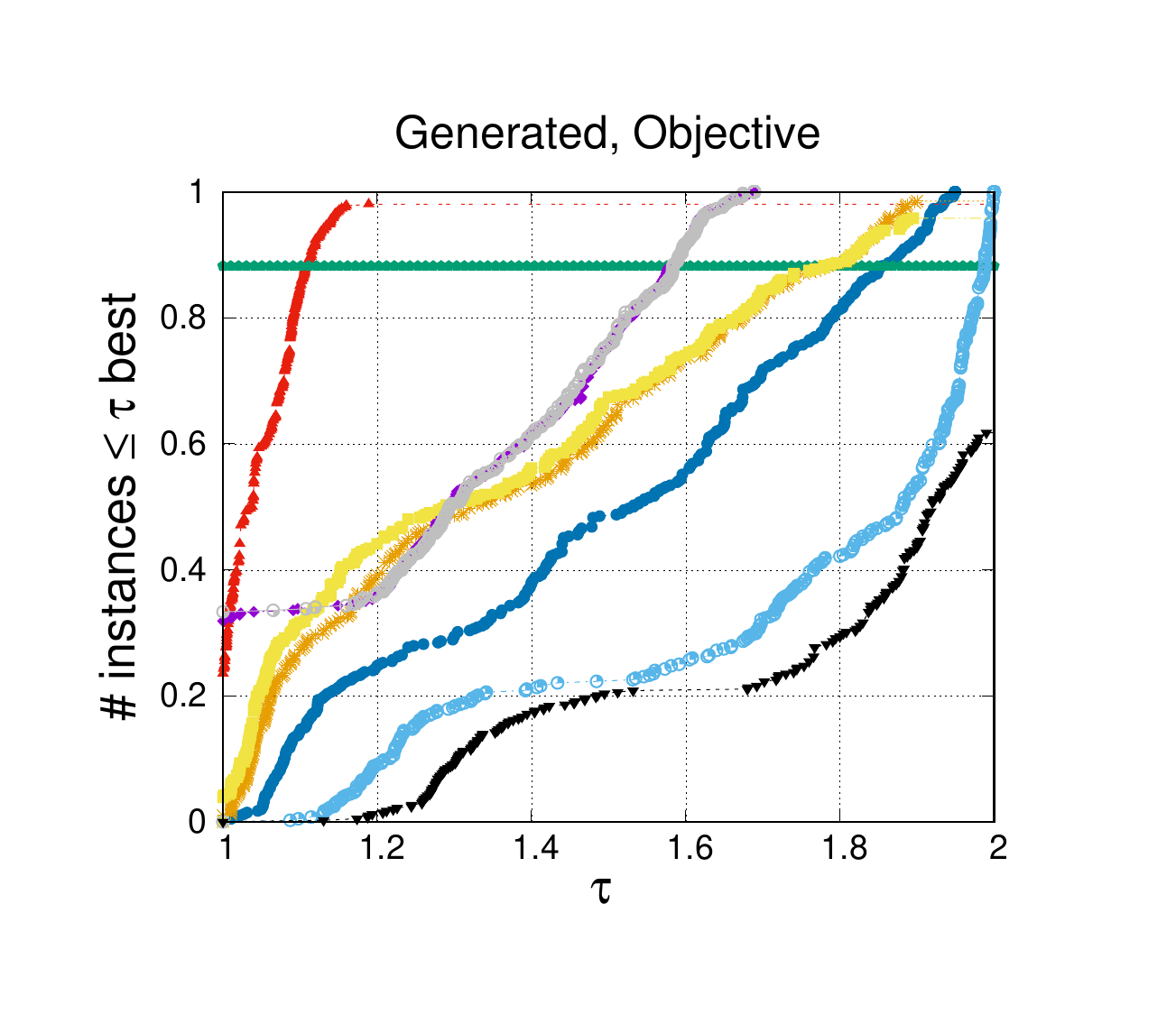} 
                  \includegraphics[width=.49\textwidth]{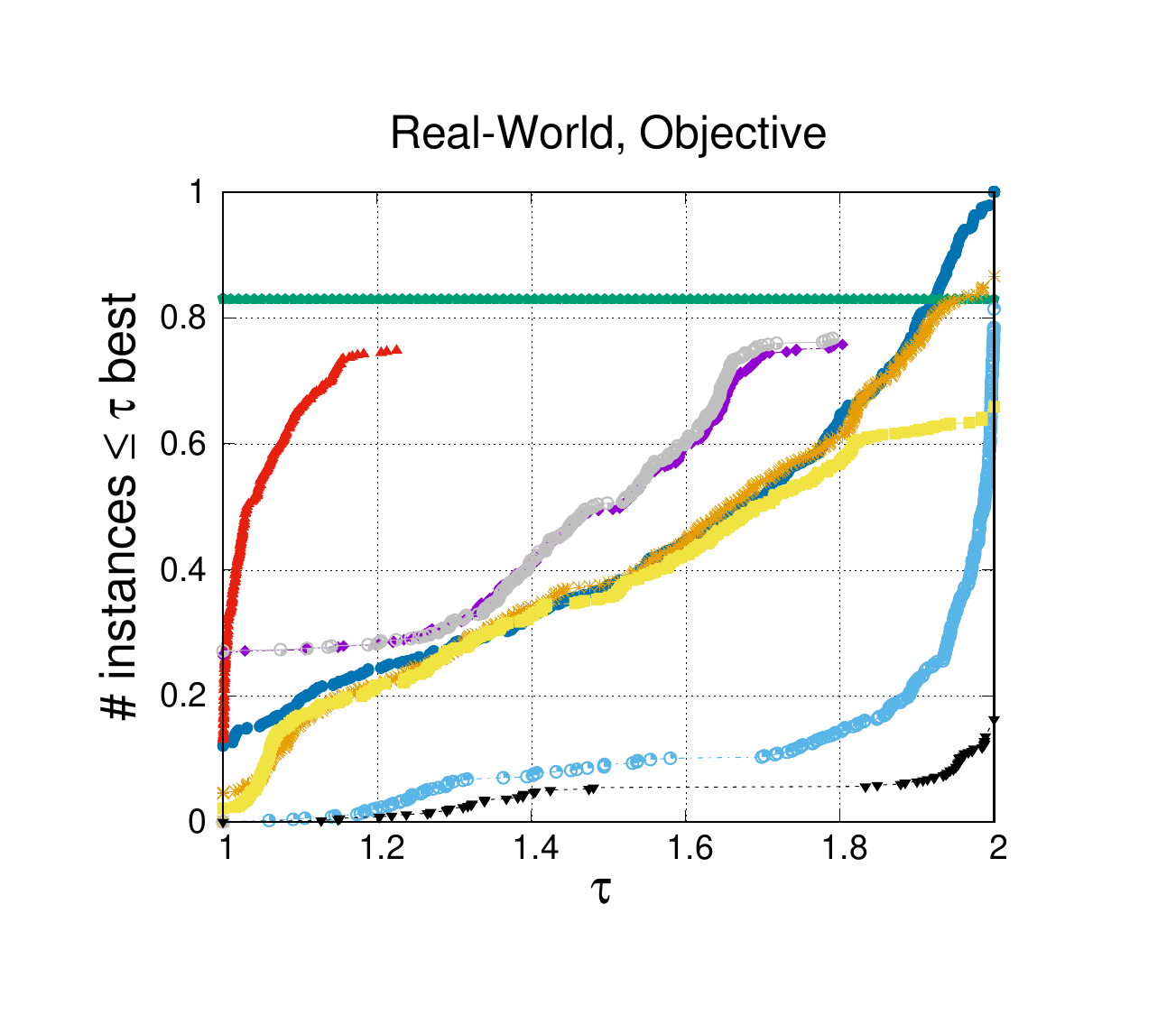}  \\
                                         \vspace*{-1cm}
                  \includegraphics[width=.49\textwidth]{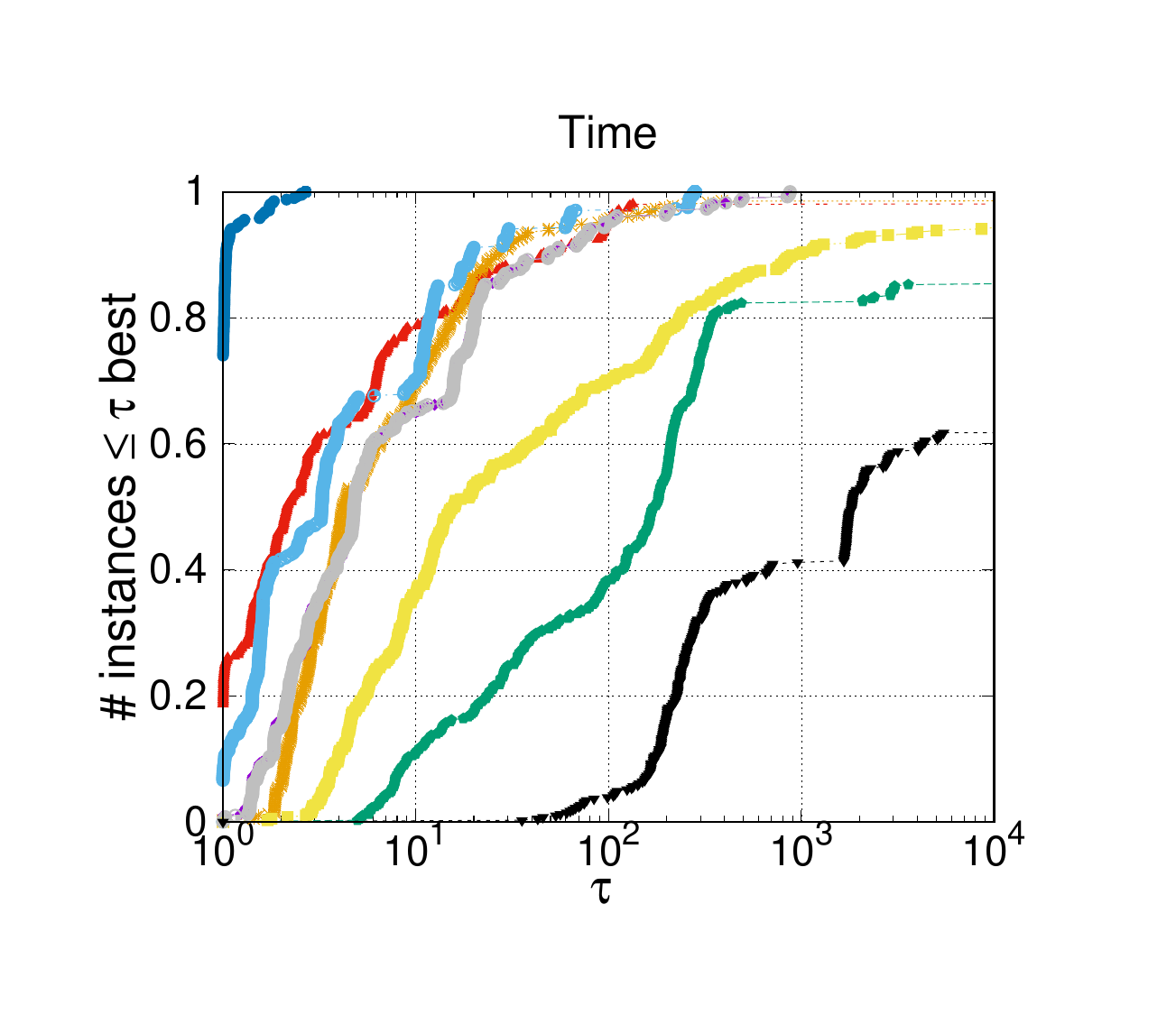} 
                  \includegraphics[width=.49\textwidth]{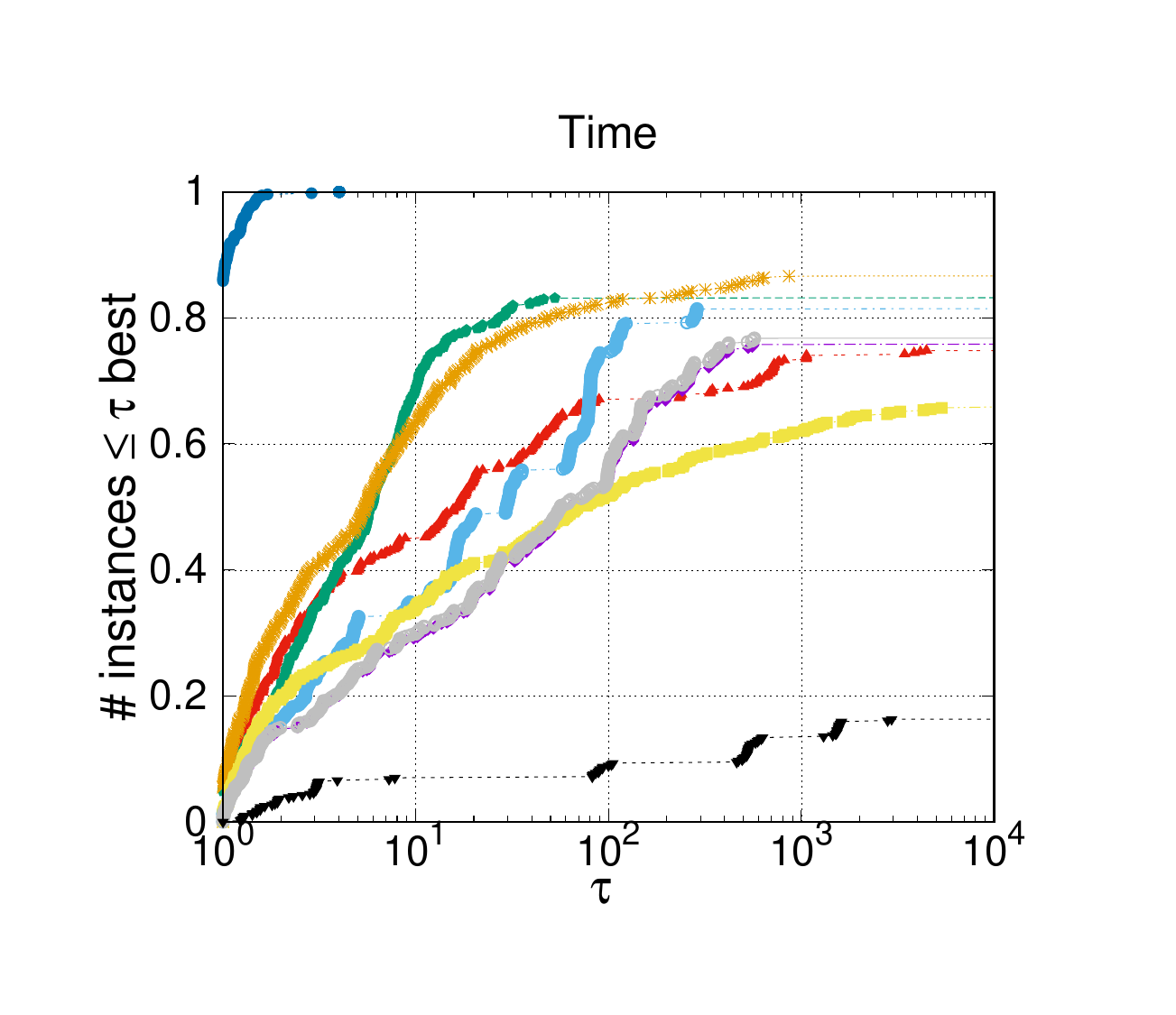} \\
                                         \vspace*{-1cm}
                  \includegraphics[width=.49\textwidth]{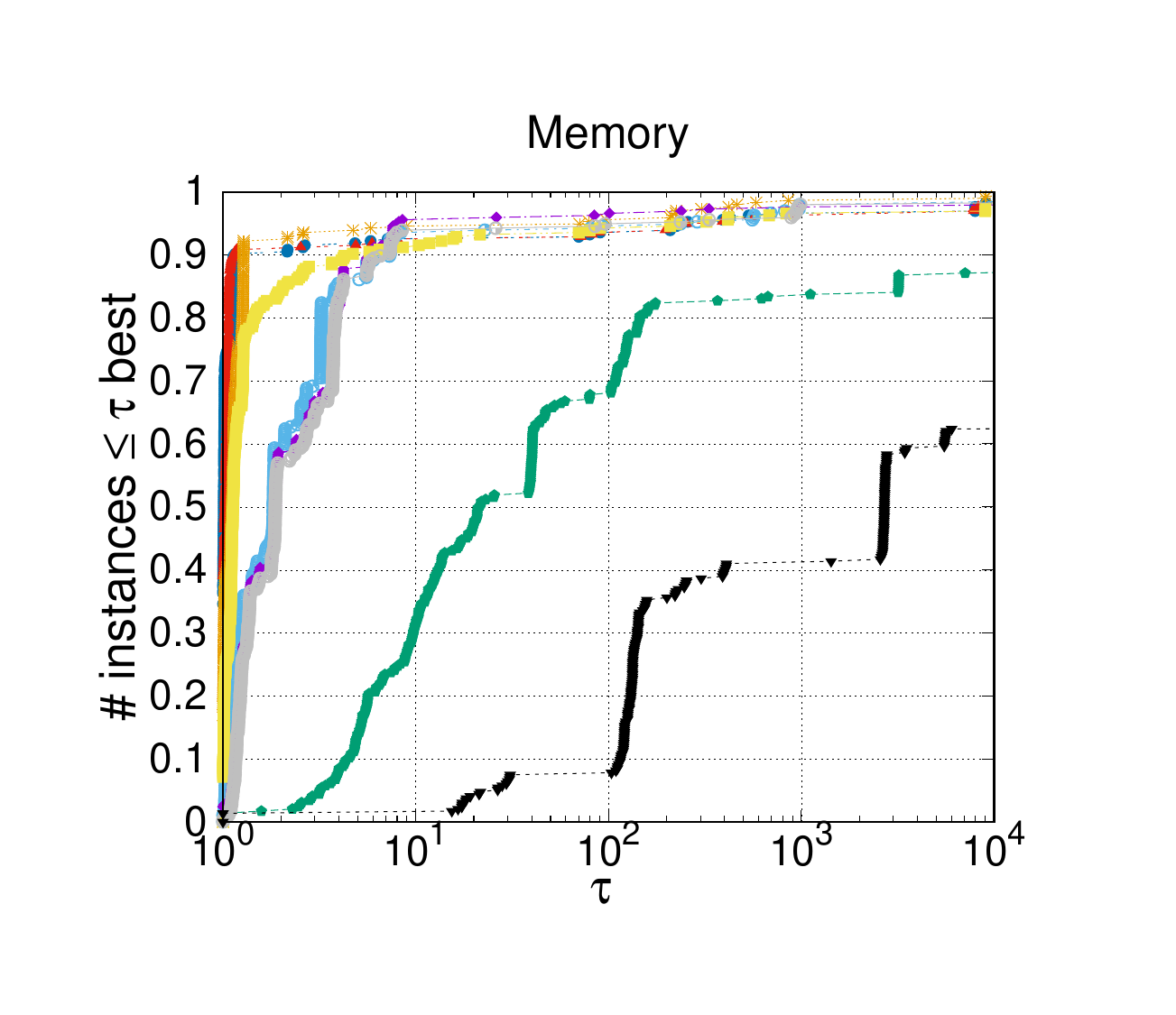}  
                  \includegraphics[width=.49\textwidth]{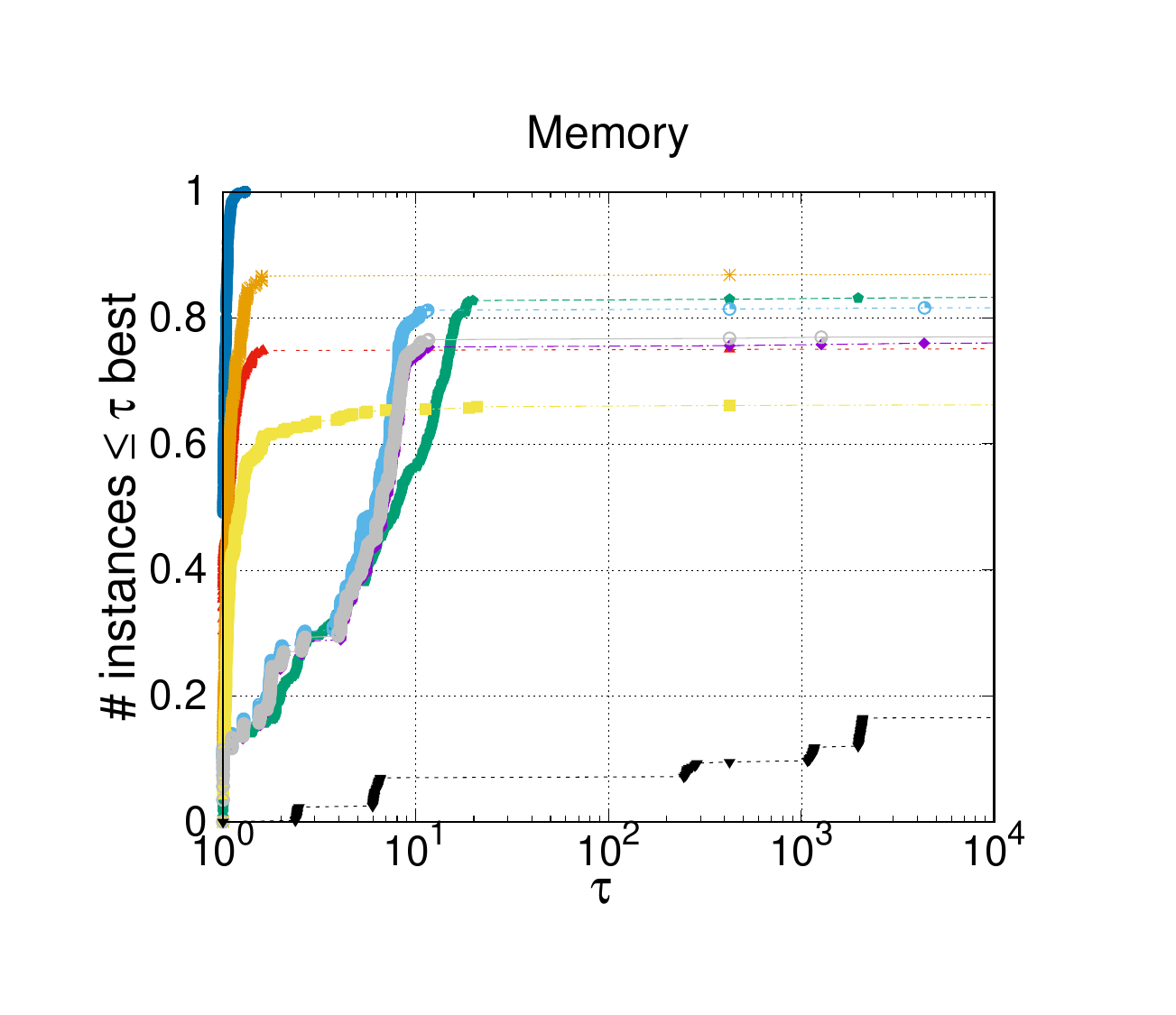} \\
                                         \vspace*{-1cm}

  	\end{figure}

  	\begin{figure}[!t]
  		\centering
        \caption{Performance profile for the state-of-the-art comparison on solution quality, running time and memory consumption on instances with \emph{large cost links}.}
  		\label{fig:ex:largelink}
\includegraphics[width=.8\textwidth]{add_plots/legende.png}
                  \includegraphics[width=.49\textwidth]{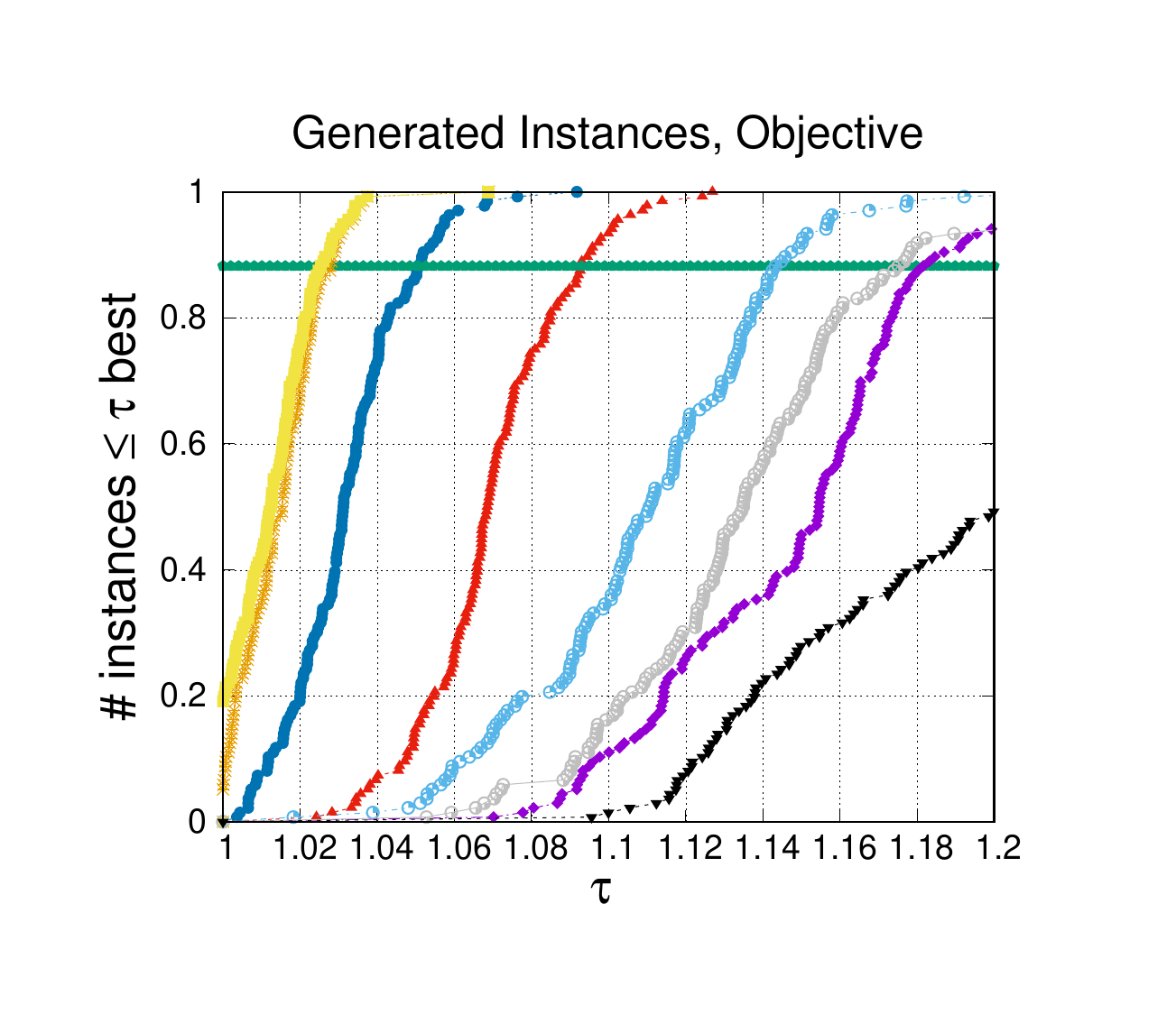} 
                  \includegraphics[width=.49\textwidth]{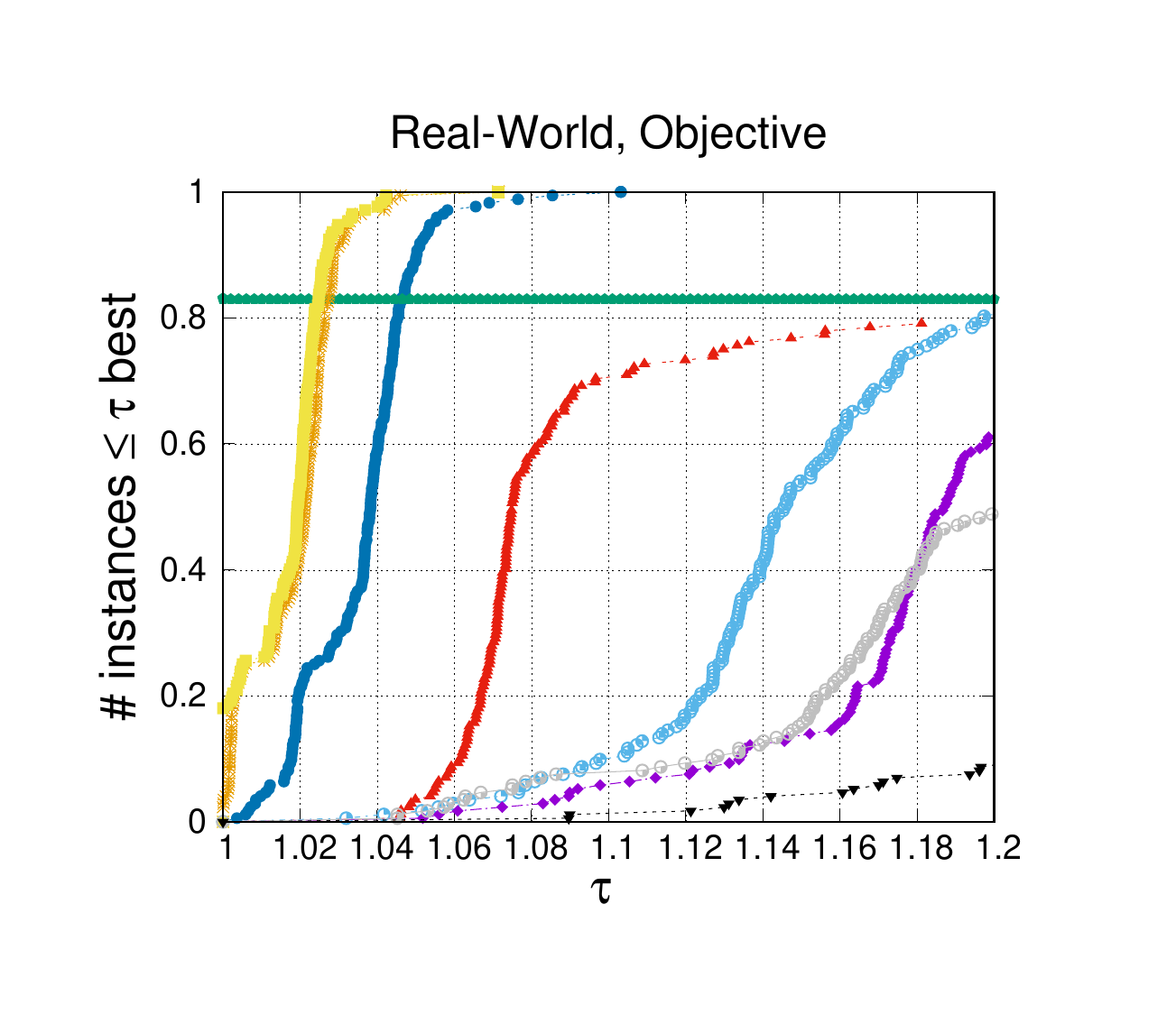} \\
                                         \vspace*{-1cm}
                  \includegraphics[width=.49\textwidth]{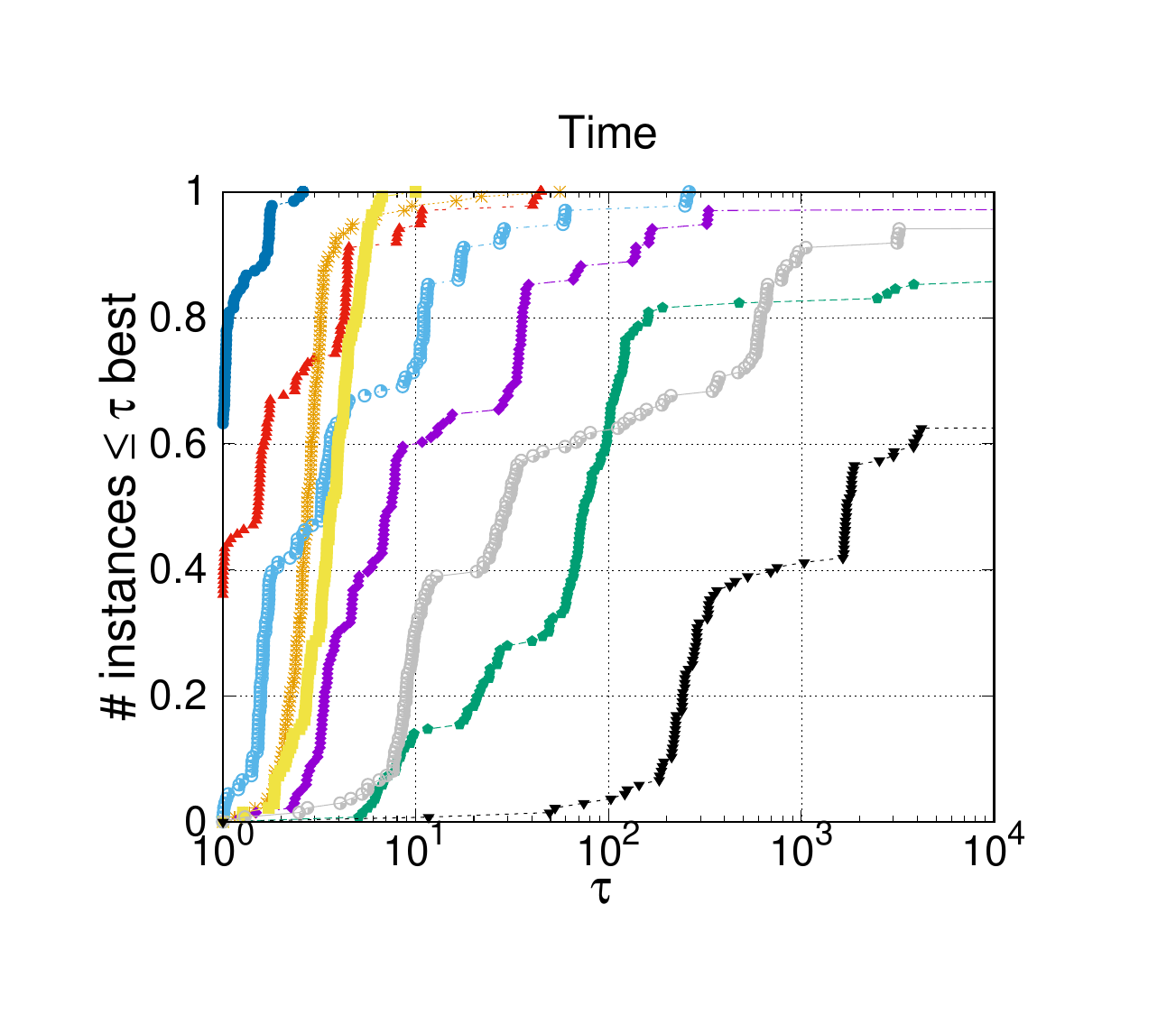} 
                  \includegraphics[width=.49\textwidth]{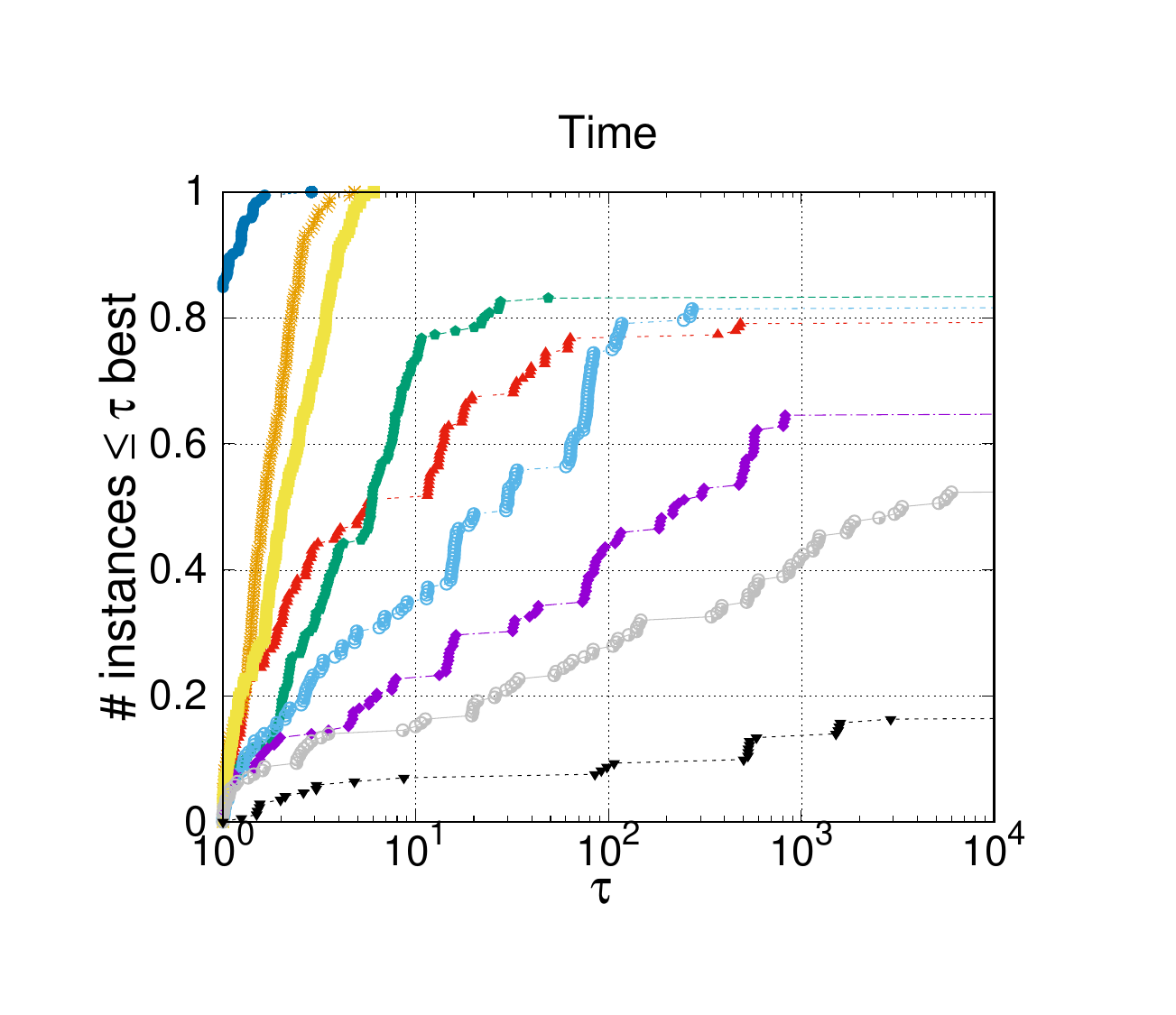} \\
                                         \vspace*{-1cm}
                  \includegraphics[width=.49\textwidth]{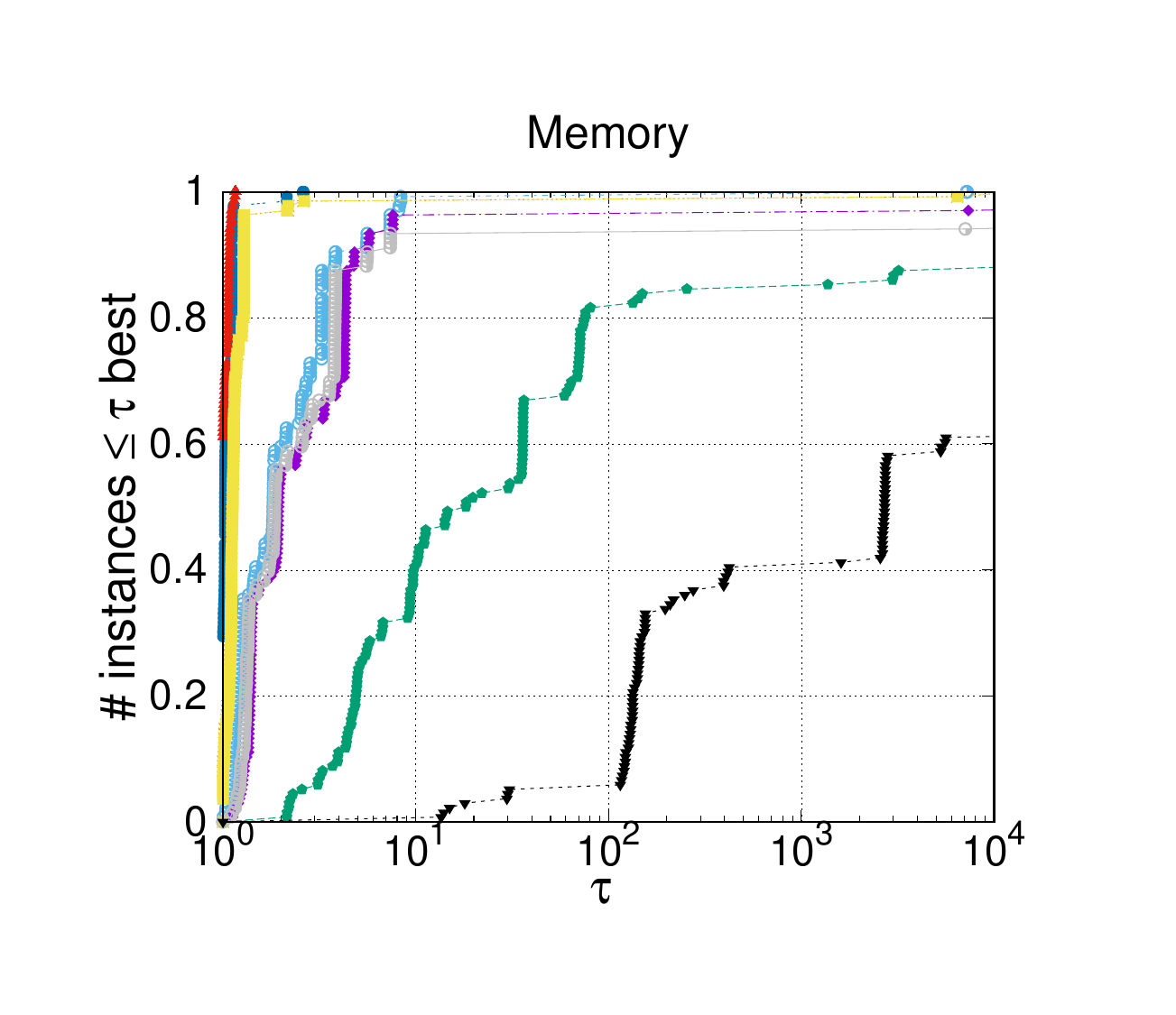}  
                  \includegraphics[width=.49\textwidth]{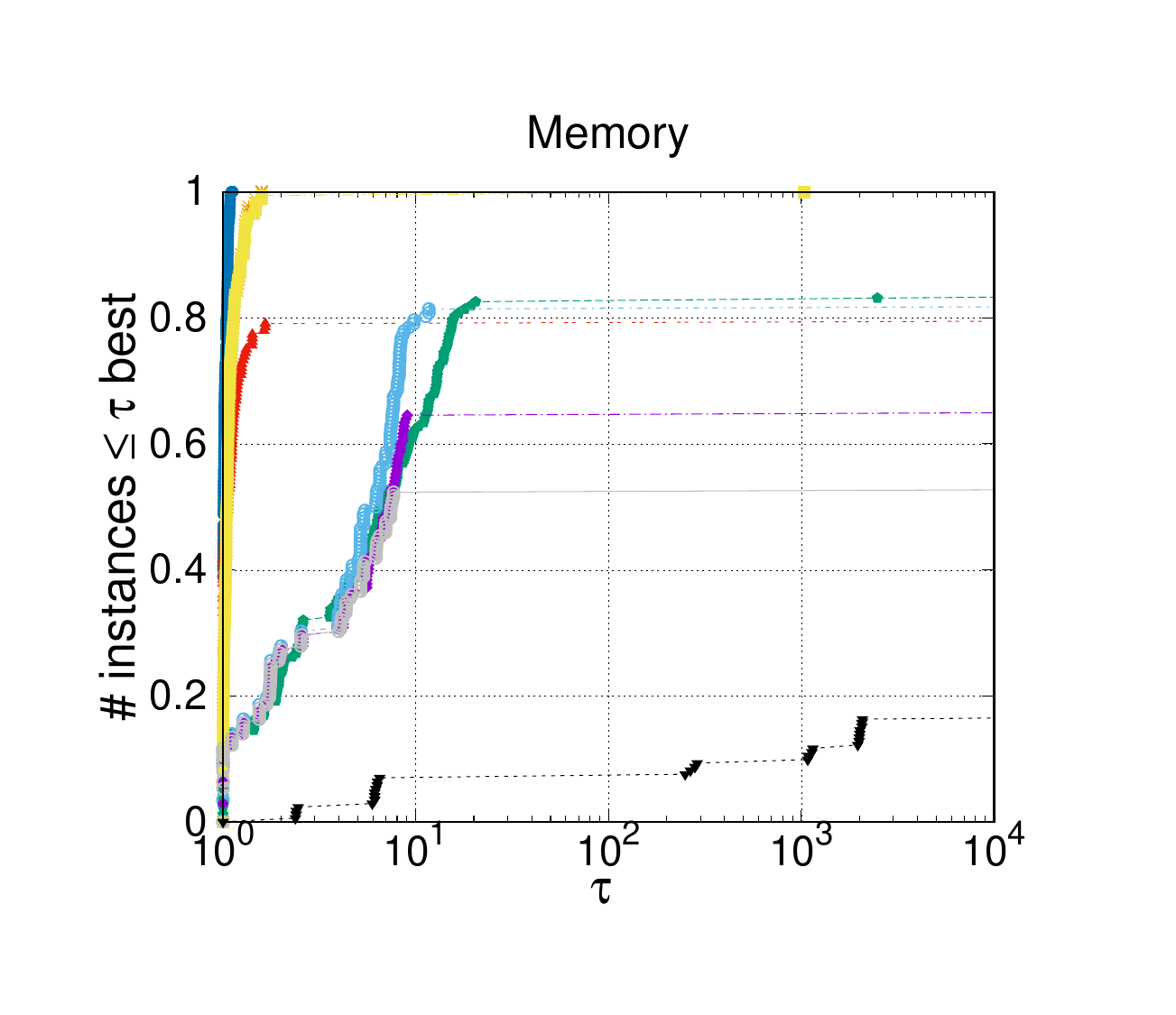} \\
                                         \vspace*{-1cm}
                  
  	\end{figure}

\textbf{Small Link Costs.}
  	The best performing \emph{heuristic} in terms of solution quality on the graphs with small link costs is {\gwc}, see Figure~\ref{fig:ex:smalllink}. Unlike the other algorithms, {\gwc} uses the cost per augmented minimum cut to decide for the solution links. For small link costs, there are a lot of links having the same or similar costs. Therefore, the cost per augmented minimum cut is more impactful than relying only on the cost of a link.
  	For small link costs the {\mstconnect} algorithms are outperformed by the competitors {\FSM} and {\HBD} regarding the solution quality.
    On the instances solvable by all algorithms, {\gwc} achieves solutions that are on average only 4\% larger than the optimum solution, while 
	the results computed by the competitors {\FSM} and {\HBD} are on average \numprint{28}\,\% larger than the optimum solution value. 
  	The {\SMC} algorithm computes worse results than any of our algorithms, performing only slightly better than the 2-approximation.
    Overall, our exact solver {\ILP} can solve \numprint{85.6}\,\% of these instances to optimality. 
 	As can be seen in Figure~\ref{fig:ex:smalllink} (middle), 
    {\mstconnect} is the overall fastest algorithm.  On \emph{generated} instances with small link costs {\gwc}, which yielded the overall best results on these graphs, is on average the second fastest algorithm for these instances. 
    On \emph{real-world} instances on the other hand, our exact solver {\ILP} is on average more than \numprint{3} times faster than {\gwc} and also outperforms all the previous state-of-the-art competitors regarding running time. Since the cactus graph representation of real-world instances usually do not have large cycles, this benefits our solver {\ILP}. The number of minimum cuts and therefore the number of constraints of the ILP~\eqref{eq:ilp} grows quadratically with the size of the largest cycle and only linearly with tree edges.

\textbf{Large Link Costs.}
	The best (heuristic) results for graphs with large link costs are achieved by {\mstconnect} 
    improved by {\ls} for $k=5$ closely followed by $k=3$, see Figure~\ref{fig:ex:largelink} (top). On this set, the wide range of different link costs makes the cost per link more significant, which benefits {\mstconnect}.
    \emph{All} competing algorithms {\HBD}, {\FSM} and {\SMC} yield worse results compared to all our {\mstconnect} approaches 
    including {\mstconnect} without local search and {\gwc}. 
   In particular, our fastest algorithm {\mstconnect}, is on average \numprint{8}\,\% (restricted to the instances solvable by {\SMC}) %
    better compared to {\SMC}, the best performing competitor on these instances, while also being 7 times faster.
   When comparing to the exact solution on the instances solvable by {\ILP}, {\mstconnect} is on average \numprint{3.8}\,\% away from the optimal solution. When additionally using our local search to improve the solution,~i.e. using LS($5$), the result is only \numprint{1.8}\,\% away from the optimum solution.
    All of our heuristic algorithms outperform even the fastest competitor {\SMC} 
    with regard to running time on most large cost real-world instances. On all instances even our slowest algorithm {\gwc} is \numprint{1.7} %
    times faster than {\SMC}, while our fastest algorithm {\mstconnect} outperforms {\SMC} by a factor of \numprint{6.3}.~%
    When only considering real-world instances, the difference gets even larger.
    Here, {\mstconnect} is on average \numprint{9.1} %
    times faster than {\SMC}
    and even our exact solver {\ILP} is a factor of \numprint{2.6} times faster than {\SMC}.
    \newpage
	
	\textbf{Memory Consumption.}
{\mstconnect} requires overall 
    the least amount of memory. Indeed, improving the solution with {\ls} increases the memory consumption with increasing $k$, 
    especially on small link cost instances. Still, the memory usage for {\mstconnect} + {\lsfive} is on 
    average \numprint{1.8} times better than the memory consumption of the previous state-of-the-art algorithms. %
 	On average all our algorithms outperform the competitors regarding memory consumption on \emph{generated} graphs. For the \emph{real-world} instances
    only {\mstconnect} and {\mstconnect} + {\lsthree} (with large link costs also {\lsfive}), 
    need less memory than the other algorithms on almost every instance. Overall, {\mstconnect} uses \numprint{2.6} times less memory than 
    the competitor {\SMC} with the lowest memory consumption. Lastly, we like to note that most of the real-world instances not solvable by {\ILP} are only solvable with our {\mstconnect} approaches.

\textbf{Advise for Practitioners.} 
 A critical factor for the algorithms is the cost associated with the links of the augmentation problem. When dealing with scenarios where the link costs are very large, it is advisable to lean towards \mstconnect~algorithms. 
Conversely, for cases where the link costs are relatively small and instances are similar to the generated ones used in this work, the \gwc~algorithm emerges as a preferable choice. 
For other real-world instances with small link cost, the \ILP~method is the algorithm of choice. In general, local search can help to improve the result at the expense of running time.

\section{Conclusion}
In recent years, new scalable algorithms for the minimum cut and the all minimum cut problem have been engineered~\cite{mincut,mincutjv}.
These algorithms are important subroutines for algorithms that tackle the connectivity augmentation problem. This inspired us to engineer novel efficient algorithms for the connectivity augmentation problem.
In this work, we implemented recently published approximation algorithms, as well as new heuristic strategies and an exact approach for solving the weighted connectivity augmentation problem in large graphs efficiently. Our greedy heuristic \gwc~excels in solving small link cost instances, while our minimum spanning tree-based algorithm \mstconnect~is the top choice for large cost instances in terms of solution quality, running time, and memory consumption. Additionally, we introduce a novel local search algorithm \ls~that enhances existing solutions -- the first local search algorithm in the literature. Lastly, we engineer an exact solver \ILP, which is also able to compete regarding running time on real-world instances with small link costs.
We conducted experiments comparing our implementations with our best faith implementation of Watanabe et al.'s state-of-the-art solvers. Our algorithms  significantly surpass their results in solution quality, running time, and memory consumption and are very close to optimal solutions. 
Surprisingly, on real-world instances all previous state-of-the-art algorithms, \ie the algorithm by Watanabe \etal, are even outperformed  in terms of running time by our exact solver {\ILP} on the (large number of) instances that it could solve. Instances that the {\ILP} approach could not solve, have also not been solved by the algorithms by Watanabe {\etal}  %
Important future work includes exploring additional local search algorithms and reducing its search space through additional pruning. Additionally, we want to improve the scalability of these algorithms even further.  Given the good results of our algorithm we plan to release it as open source.

\newpage

\bibliographystyle{plain}
\bibliography{quellen}

\newpage
\appendix
\section{Data Structures}\label{sec:appendix_data_structure}

All minimum cuts of a graph $G$ can be represented as a (potentially significantly smaller) cactus graph $C$, it is sufficient to do computations on the cactus graph $C$. To be able to give an augmentation for the original graph $G$ it must be known. To achieve this, $G$ is stored in adjacency list representation along with an array modelling the function $\Pi:V(G)\to V(C)$ and using vertex IDs as indices.
Additionally, the link set $L$ must be transferred to a link set $L_C$ for the cactus graph $C$. A link $l=(u,v)$ is translated to a link $l_C=\Pi(l):=(\Pi(u),\Pi(v))$ in the cactus graph. %
To be able to reverse this function and obtain a link in $G$ again, the endpoints of the original link $l$ are stored as well. 
This can lead to parallel links $g,h\in L,g\neq h$ with $\Pi(g)=\Pi(h)$ in the link graph $G_{L_C}$ (with not necessarily equal weight). We drop all parallel links in $G_{L_C}$ except for one of smallest~weight.
We store the link set $L_C$ in an adjacency matrix and only need to keep an arbitrary link of minimum weight per vertex pair.

\textbf{Dynamic Cactus / Updating Cactus.}
For a dynamic cactus representation we use a similar approach for updating the cactus as proposed by Henzinger, Noe and Schulz in~\cite{dynamic-cactus}, where a union find data structure is used to keep track of the function $\Pi$ that associates each vertex of $G$ with a vertex of the cactus. To be able to provide more information, this data structure is extended and modified. For implementation details, we refer to~\cite{wca-thesis}. By using this data structure we are able to compute the number of cuts that a given link crosses within $\mathcal{O}(|V_c|)$. 
CS: removed this as we seem not to use it 

\section{Cactus Graph Generation}\label{sec:generator}
Graphs where the cactus graph representing all minimum cuts has a complex structure are rare in real-world instances like technical or social networks. A graph generator can be used to generate complex graphs of different sizes and evaluate algorithms for such cases. This section describes an algorithm that is able to generate graphs with given properties, namely the number of vertices and the number of cycles.

\textbf{Generating Cactus Graphs.}
Given two integers $n,c\in\mathbb{N}$, $n>c$, the goal is to generate a cactus graph $C$ with $n$ vertices and $c$ cycles. This is done by generating the cycles iteratively. The average number of vertices per cycle is $n/c$. To get a larger amount of possible graphs the number of vertices per cycle is randomly distributed around the average $n/c$. A Poisson distribution turned out to yield a higher variety of graphs than a uniform distribution. To ensure that the correct number of cycles will be achieved, the distribution range is bounded such that, considering cycles already generated, at least one vertex for every remaining cycle is available. The first generated cycle is used as the base graph. Each consecutive cycle must additionally use an existing vertex to connect to the base graph. This vertex is chosen uniformly among existing vertices.

\textbf{Graph with given Cactus Graph.}
Given a cactus graph $C$, one might ask how a graph $G$ of which $C$ represents all minimum cuts could look like. Trivially $G$ could be equal to $C$. Different graphs could be constructed by reversing the process of edge contractions during the computation of a cactus graph. In particular, each vertex of the cactus graph $C$ could be replaced by a dense subgraph. Let $k$ be the desired connectivity. Then, each vertex can be replaced by an at least $(k+1)$-connected subgraph while each link is replaced by $k$ unweighted links in case of a tree edge or by $k/2$ links in case of a cycle edge between corresponding dense subgraphs. However, for all algorithms considered in this paper, neither the structure of the original graph nor the connectivity $k$ matter as they are abstracted in preprocessing steps. Therefore, only the simplest case of cactus graphs with connectivity $k=2$ is considered.
\vfill
\section{Instance Details}

\begin{table}[H]
	\caption{Properties of all graphs $G$ and their corresponding cactus graph $C$ used in the evaluation.}
	\label{tab:apx:instances}
		\resizebox{\linewidth}{!}{
		\centering
		\hspace{1.9cm}
		\begin{minipage}{1.28\linewidth}
	    \begin{tabular*}{\textwidth}{llrrrr}
		class &graph & $|V(C)|$ & $|E(C)|$ & $|V(G)|$ & $|E(G)|$\\
		\cmidrule(r){1-6}
		\textbf{Real-World}
		&coAuthorsCiteseer&\numprint{30322}&\numprint{30321}&\numprint{227320}&\numprint{814134}\\
		&preferentialAttachment&\numprint{28530}&\numprint{28529}&\numprint{100000}&\numprint{499985}\\
		&delaunay\_n21&\numprint{23719}&\numprint{23718}&\numprint{2097152}&\numprint{6291408}\\
		&luxembourg.osm&\numprint{23077}&\numprint{23076}&\numprint{114599}&\numprint{119666}\\
		&kkt\_power&\numprint{22388}&\numprint{22387}&\numprint{2063494}&\numprint{6482320}\\
		&delaunay\_n20&\numprint{11740}&\numprint{11739}&\numprint{1048576}&\numprint{3145686}\\
		&coPapersDBLP&\numprint{10244}&\numprint{10243}&\numprint{540486}&\numprint{15245729}\\
		&as-22july06&\numprint{7999}&\numprint{7998}&\numprint{22963}&\numprint{48436}\\
		&\textit{hugetric-00000}&\numprint{6617}&\numprint{8040}&\numprint{5824554}&\numprint{8733523}\\
		&coPapersCiteseer&\numprint{6372}&\numprint{6371}&\numprint{434102}&\numprint{16036720}\\
		&delaunay\_n19&\numprint{5977}&\numprint{5976}&\numprint{524288}&\numprint{1572823}\\
		&PGPgiantcompo&\numprint{5513}&\numprint{5512}&\numprint{10680}&\numprint{24316}\\
		&vsp\_vibrobox\_scagr7-2c\_rlfddd&\numprint{3956}&\numprint{3955}&\numprint{77328}&\numprint{435586}\\
		&vsp\_finan512\_scagr7-2c\_rlfddd&\numprint{3936}&\numprint{3935}&\numprint{139752}&\numprint{552020}\\
		&vsp\_sctap1-2b\_and\_seymourl&\numprint{3288}&\numprint{3287}&\numprint{40174}&\numprint{140831}\\
		&finan512&\numprint{3073}&\numprint{3072}&\numprint{74752}&\numprint{261120}\\
		&delaunay\_n18&\numprint{2929}&\numprint{2928}&\numprint{262144}&\numprint{786396}\\
		&vsp\_south31\_slptsk&\numprint{2710}&\numprint{2709}&\numprint{39668}&\numprint{189914}\\
		&vsp\_model1\_crew1\_cr42\_south31&\numprint{2561}&\numprint{2560}&\numprint{45101}&\numprint{189976}\\
		&vsp\_c-30\_data\_data&\numprint{1768}&\numprint{1767}&\numprint{11023}&\numprint{62184}\\
		&power&\numprint{1612}&\numprint{1611}&\numprint{4941}&\numprint{6594}\\
		&delaunay\_n17&\numprint{1484}&\numprint{1483}&\numprint{131072}&\numprint{393176}\\
		&af\_shell9&\numprint{1276}&\numprint{1275}&\numprint{504855}&\numprint{8542010}\\
		&vsp\_bump2\_e18\_aa01\_model1\_crew1&\numprint{1212}&\numprint{1211}&\numprint{56438}&\numprint{300801}\\
		&\textit{t60k}&\numprint{1136}&\numprint{1332}&\numprint{60005}&\numprint{89440}\\
		&vsp\_p0291\_seymourl\_iiasa&\numprint{942}&\numprint{941}&\numprint{10498}&\numprint{53868}\\
		&ldoor&\numprint{904}&\numprint{903}&\numprint{952203}&\numprint{22785136}\\
		&latin\_square\_10&\numprint{901}&\numprint{900}&\numprint{900}&\numprint{307350}\\
		&delaunay\_n16&\numprint{790}&\numprint{789}&\numprint{65536}&\numprint{196575}\\
		&add32&\numprint{681}&\numprint{680}&\numprint{4960}&\numprint{9462}\\
		&vibrobox&\numprint{625}&\numprint{624}&\numprint{12328}&\numprint{165250}\\
		&add20&\numprint{367}&\numprint{366}&\numprint{2395}&\numprint{7462}\\
		&delaunay\_n15&\numprint{359}&\numprint{358}&\numprint{32768}&\numprint{98274}\\
		&NLR&\numprint{336}&\numprint{335}&\numprint{4163763}&\numprint{12487976}\\
		&G3\_circuit&\numprint{317}&\numprint{316}&\numprint{1585478}&\numprint{3037674}\\
		&vsp\_befref\_fxm\_2\_4\_air02&\numprint{215}&\numprint{214}&\numprint{14109}&\numprint{98224}\\
		&delaunay\_n14&\numprint{181}&\numprint{180}&\numprint{16384}&\numprint{49122}\\
		&audikw1&\numprint{163}&\numprint{162}&\numprint{943695}&\numprint{38354076}\\
		&email&\numprint{156}&\numprint{155}&\numprint{1133}&\numprint{5451}\\
		&uk&\numprint{136}&\numprint{135}&\numprint{4824}&\numprint{6837}\\
		&M6&\numprint{132}&\numprint{131}&\numprint{3501776}&\numprint{10501936}\\
		&cage15&\numprint{129}&\numprint{128}&\numprint{5154859}&\numprint{47022346}\\
		&memplus&\numprint{110}&\numprint{109}&\numprint{17758}&\numprint{54196}\\
 		\textbf{Generated Cycles}
 		&cycle-5000 & \numprint{5000} & \numprint{5000}&&\\
 		&cycle-1000 & \numprint{1000} & \numprint{1000} &&\\
 		&cycle-500 & 500 & 500 &&\\
 		&cycle-300 & 300 & 300 &&\\
 		&cycle-200 & 200 & 200 &&\\
 		&cycle-100 & 100 & 100 &&\\
		&cycle-50 & 50 & 50 &&\\
		 \textbf{Generated Stars} 
		&star-5000 & \numprint{5000} & \numprint{4999} &&\\
		&star-1000 & \numprint{1000} & \numprint{999}&&\\
		&star-500 & 500 & 499 &&\\
		&star-300 & 300 & 300 &&\\
		&star-200 & 200 & 199&& \\
		&star-100 & 100 & 99&& \\
		&star-50 & 50 & 49 &&\\
		 \textbf{Generated Cacti} 
		&cactus\{16-20\} & 1000 & 1199&&\\
		&cactus\{11-15\} & 200 & 279&&\\
		&cactus\{06-10\} & 100 & 119&&\\
		&cactus\{01-05\} & 100 & 109&&\\
	\end{tabular*}
\end{minipage}}
\end{table}

\end{document}